\let\proof\relax   
\def\maketag@@@#1{\hbox{\m@th\normalfont#1}}
\newtheorem{lemma}{Lemma}
\newtheorem{theorem}{Theorem}
\newtheorem{remark}{Remark}
\newtheorem*{example*}{Example}
\newcommand*{\transpose}{%
  {\mathpalette\@transpose{}}%
}
\newcommand*{\@transpose}[2]{%
  % #1: math style
  % #2: unused
  \raisebox{\depth}{$\m@th#1\intercal$}%
}
\begin{document}

\makeatletter
\newcommand{\raisemath}[1]{\mathpalette{\raisem@th{#1}}}
\newcommand{\raisem@th}[3]{\raisebox{#1}{$#2#3$}}
\makeatother

\newcommand{\mstk}{\hspace{-0.145cm}*}

\newcommand{\mstl}{\hspace{-0.105cm}*}

\newcommand{\mstm}{\hspace{-0.175cm}*}

\newcommand{\SB}[3]{
\sum_{#2 \in #1}\biggl|\overline{X}_{#2}\biggr| #3
\biggl|\bigcap_{#2 \notin #1}\overline{X}_{#2}\biggr|
}

\newcommand{\Mod}[1]{\ (\textup{mod}\ #1)}

\newcommand{\overbar}[1]{\mkern 0mu\overline{\mkern-0mu#1\mkern-8.5mu}\mkern 6mu}

\makeatletter
\newcommand*\nss[3]{%
  \begingroup
  \setbox0\hbox{$\m@th\scriptstyle\cramped{#2}$}%
  \setbox2\hbox{$\m@th\scriptstyle#3$}%
  \dimen@=\fontdimen8\textfont3
  \multiply\dimen@ by 4             % 4x the default rule thickness
  \advance \dimen@ by \ht0
  \advance \dimen@ by -\fontdimen17\textfont2
  \@tempdima=\fontdimen5\textfont2  % x-height
  \multiply\@tempdima by 4
  \divide  \@tempdima by 5          % 80% of the x-height
  % Modifications are only necessary if the top of the subscript is not that high:
  \ifdim\dimen@<\@tempdima
    \ht0=0pt                        % don't let the subscript interfere
    \@tempdima=\fontdimen5\textfont2
    \divide\@tempdima by 4          % 25% of the x-height
    \advance \dimen@ by -\@tempdima % if >0, add to depth of superscript!
    \ifdim\dimen@>0pt
      \@tempdima=\dp2
      \advance\@tempdima by \dimen@
      \dp2=\@tempdima
    \fi
  \fi
  #1_{\box0}^{\box2}%
  \endgroup
  }
\makeatother

\makeatletter
\renewenvironment{proof}[1][\proofname]{\par
  \pushQED{\qed}%
  \normalfont \topsep6\p@\@plus6\p@\relax
  \trivlist
  \item[\hskip\labelsep
        \itshape
%    #1\@addpunct{.}]\ignorespaces% DELETED
    #1\@addpunct{:}]\ignorespaces% ADDED
}{%
  \popQED\endtrivlist\@endpefalse
}
\makeatother

\makeatletter
\newsavebox\myboxA
\newsavebox\myboxB
\newlength\mylenA

\newcommand*\xoverline[2][0.75]{%
    \sbox{\myboxA}{$\m@th#2$}%
    \setbox\myboxB\null% Phantom box
    \ht\myboxB=\ht\myboxA%
    \dp\myboxB=\dp\myboxA%
    \wd\myboxB=#1\wd\myboxA% Scale phantom
    \sbox\myboxB{$\m@th\overline{\copy\myboxB}$}%  Overlined phantom
    \setlength\mylenA{\the\wd\myboxA}%   calc width diff
    \addtolength\mylenA{-\the\wd\myboxB}%
    \ifdim\wd\myboxB<\wd\myboxA%
       \rlap{\hskip 0.5\mylenA\usebox\myboxB}{\usebox\myboxA}%
    \else
        \hskip -0.5\mylenA\rlap{\usebox\myboxA}{\hskip 0.5\mylenA\usebox\myboxB}%
    \fi}
\makeatother

\xpatchcmd{\proof}{\hskip\labelsep}{\hskip3.75\labelsep}{}{}

\pagestyle{empty}

% \fontsize{21}{28}\selectfont 

\title{A New Approach to Harnessing Side Information \\ in Multi-Server Private Information Retrieval}

\author{Ningze Wang, Anoosheh Heidarzadeh, and Alex Sprintson\thanks{Ningze Wang and Alex Sprintson are with the Department of Electrical and Computer Engineering, Texas A\&M University %, College Station, TX 77843 USA 
(E-mail: \{ningzewang,  spalex\}@tamu.edu). Anoosheh Heidarzadeh is with the Department of Electrical and Computer Engineering, Santa Clara University, 
%Santa Clara, CA 95053 USA 
(E-mail: aheidarzadeh@scu.edu).}
}

\maketitle 

\thispagestyle{empty}

\begin{abstract}
This paper presents new solutions for Private Information Retrieval (PIR) with side information. This problem is motivated by PIR settings in which a client has side information about the data held by the servers and would like to leverage this information in order to improve the download rate. 
The problem of PIR with side information has been the subject of several recent studies that presented achievability schemes as well as converses for both multi-server and single-server settings. 
However, the solutions for the multi-server settings adapted from the solutions for the single-server setting in a rather straightforward manner, 
relying on the concept of \emph{super-messages}. 
Such solutions require an exponential degree of sub-packetization (in terms of the number of messages). 

This paper makes the following contributions. First, we revisit the PIR problem with side information and present a new approach to leverage side information in the context of PIR. The key idea of our approach is a randomized algorithm to determine the linear combinations of the sub-packets that need to be recovered from each server. 
In addition, our approach takes advantage of the fact that the identity of the side information messages does not need to be kept private, and, as a result, the information retrieval scheme does not need to be symmetric.
Second, we present schemes for PIR with side information that achieve a higher rate than previously proposed solutions and require a significantly lower degree of sub-packetization (linear in the number of servers). 
Our scheme not only achieves the highest known download rate for the problem at hand but also invalidates a previously claimed converse bound on the maximum achievable download rate.
%\alex{ 
% Our scheme, to the best of our knowledge, achieves the highest known download rate for the problem at hand. Finally, our scheme invalidates a previously claimed converse bound on the maximum achievable download rate.
%}
%To the best of our knowledge, our scheme achieves the highest download rate for the problem at hand. Finally, we invalidate a previously claimed converse bound on the maximum achievable download rate. 
%We show that our scheme can be extended to the case of coded side information. 
% {\color{red} AH: I would suggest that we do not talk about coded side information. We can publish that somewhere else.}
\end{abstract}

\section{Introduction}

Private Information Retrieval (PIR) is a fundamental problem in information security and privacy. The goal of the PIR is to retrieve data from one or more remote servers while protecting the identity of the retrieved data items. 
The problem has attracted  significant attention within the computer science community since the 1990s
%The problem has attracted the significant attention of the computer science community since the 1990s 
(see, e.g., \cite{CKGS1998}) and more recently has become a very active research topic in the information theory community (see, e.g., \cite{SJ2017,SJ2016ArbitraryTIFS,TSC2019,VBU2022,SJ2018Colluding} and references therein). 
This research { has} 
produced many interesting results on several important variations of the PIR problem, including PIR in the presence of colluding servers~\cite{SJ2018Colluding,BU2019Colluding,ZX2019,YLK2020,LJJ2021,HFLH2022,ZTSP2022}, PIR from coded data~\cite{TGKHHER2017,BU18,BAWU2020}, as well as multi-message PIR~\cite{BU2018,WHS2022}. 

This paper presents new approaches for the design of multi-server PIR schemes that leverage prior side information.
The problem is inspired by PIR scenarios where a user has prior information about the { data}
%information 
stored on the servers and would like to leverage it to maximize the download rate. 
%
%The side information might originate  from the prior data that has been downloaded privately from the servers.   

The PIR setting with side information was initially studied in  single-server scenarios (refer to, e.g.,~\cite{KGHERS2017No0,LG2018,GLH2022,HKS2018,HKS2019,HKRS2019,KKHS32019, HKS2019Journal,KHSO2021,HS2021,HS2022Reuse,LJ2022,HKGRS2018}). 
 %
% HS2022LinCap
% In addition, there is a significant body of studies on several variations of single-server PIR with side information~
%\cite{KGHERS2017No0,HKGRS2018,LG2018,HKS2018,HKS2019,HKRS2019,HKS2019Journal,KKHS32019,KHSO2021,HS2021,HS2022Reuse,LJ2022,GLH2022,HS2022LinCap}. 
 %
 %
 This line of work has been extended to multi-server settings, including cache-aided PIR~\cite{T2017,WBU2018,WBU2018No2}, 
PIR with side information~\cite{KKHS12019,KGHERS2020,LG2020CISS}, 
PIR with private side information~\cite{CWJ2020,KH2021Journal,EH2024},
PIR with private coded side information~\cite{KKHS22019}, and
mutli-message PIR with private side information~\cite{SSM2018}. In PIR problems with private side information, it is required to prevent any server from identifying the side information messages. 
This requirement results in a lower rate than in scenarios where the side information messages do not need to be kept private.

The most relevant to our work are the papers by Kadhe \emph{et al.} \cite{KGHERS2020} and Li and Gastpar~\cite{LG2020CISS}. 
In particular, \cite{KGHERS2020} extends the PIR scheme developed for the single-server PIR problem for the more general multi-server setting. 
%The scheme relies on the concept of ``super-messages'' and performs well under certain divisibility conditions. 
 %
% an achievability scheme for the multi-server scenario by adapting the  
%
% presents a PIR scheme that relies on the concept of ``super-messages'' developed in the context of the single-server PIR problem with side information.  
%
Later, \cite{LG2020CISS} claimed a converse bound on the maximum achievable download rate for the multi-server PIR in the presence of side information. 

In this work, we make the following contributions. 
First, we revisit the PIR problem with side information and present a new technique to leverage side information for the construction of more efficient mutli-server PIR schemes. 
Our approach generalizes the probabilistic method due to Tian \emph{et al.} \cite{TSC2019} for settings with side information. 
The key idea is that since the identity of the side information messages does not need to be kept private, the information retrieval scheme does not need to be symmetric. 
In particular, the information requested from a server does not need to have an equal probability across different sets of demand and side information. 
This offers more flexibility in designing the retrieval scheme, enabling us to achieve higher rates compared to symmetric schemes.

%By symmetry, we mean a given query does not need to have the same probability across different problem instances, i.e., different pairs of demand message and side information messages; 

Second, we use this approach to construct an efficient randomized algorithm that determines the linear combinations of sub-packets that need to be recovered from each server. 
%To the best of our knowledge, 
Our scheme achieves the highest known download rate for this problem.
In particular, the rate of our scheme is higher than that of the scheme in~\cite{KGHERS2020} when $M+1$ does not divide $K$, 
where $M$ is the number of side information messages %available to the user 
and $K$ is the total number of messages stored at the servers. 
In addition, our scheme requires a much lower sub-packetization level of $N-1$, compared to the sub-packetization level of $N^K$ required by the existing solutions, 
where $N$ is the number of servers. 
A lower degree of sub-packetization has many practical advantages, including reduced complexity.
%each message needs to be divided into fewer sub-packets, resulting in reduced complexity and other practical advantages. 
Moreover, the rate achieved by our scheme invalidates the converse bound on the maximum achievable download rate claimed in \cite{LG2020CISS}.

\section{Problem Setup}\label{sec:SN}
We use bold-face symbols to represent random variables and regular symbols for their realizations. 
For an integer ${i\geq 1}$, we denote the set $\{1,\dots,i\}$ by $[i]$, and denote the set ${\{0\}\cup [i]}$ by $[0:i]$. 
We denote the $i$th component of a vector $\mathrm{v}$ by $\mathrm{v}(i)$, and denote its support, which refers to the index set of its nonzero components, by $\mathrm{supp}(\mathrm{v})$. 

In this work, we revisit the problem of \emph{Private Information Retrieval with Side Information (PIR-SI)}, which was initially introduced in~\cite{KGHERS2020} and subsequently studied in~\cite{LG2020CISS}. 
For completeness, the problem formulation is presented below. 

For a prime power $q$ and an integer $L\geq 1$, let $\mathbbmss{F}_q$ be a finite field of order $q$, and $\mathbbmss{F}^L_q$ be the $L$-dimensional vector space over $\mathbbmss{F}_q$. 

Consider $N>1$ non-colluding servers each of which stores an identical copy of $K>1$ messages ${\mathrm{X}_1,\dots,\mathrm{X}_K}$, where each message $\mathrm{X}_i$ is a vector in $\mathbbmss{F}^L_q$. 
That is, each message $\mathrm{X}_i$ consists of $L$ sub-packets $\mathrm{X}_{i,1},\dots,\mathrm{X}_{i,L}$, where each sub-packet $\mathrm{X}_{i,j}$ is a symbol from $\mathbbmss{F}_q$. 
We refer to $L$ as the \emph{sub-packetization level} and $q$ as the \emph{field size}.

Consider a scenario where a user wants to retrieve the message $\mathrm{X}_{\mathrm{W}}$ for some ${\mathrm{W}\in [K]}$, and  
the user has prior knowledge of $M$ messages ${\mathrm{X}_{\mathrm{S}}:=\{\mathrm{X}_i: i\in \mathrm{S}\}}$ for some $M$-subset ${\mathrm{S}\subseteq [K]\setminus \{\mathrm{W}\}}$. 
We refer to $\mathrm{X}_{\mathrm{W}}$ as the \emph{user's demand} and $\mathrm{W}$ as the \emph{demand's index}. 
In addition, we refer to $\mathrm{X}_{\mathrm{S}}$ as the \emph{user's side information}, $\mathrm{S}$ as the \emph{side information's index set}, and $M$ as the \emph{number of side information messages}. 

In line with prior work, we make the following assumptions in this work: 
(i) $\mathbf{X}_1,\dots,\mathbf{X}_K$ are independent uniform random variables over $\mathbbmss{F}^L_{q}$; 
(ii) $\mathbf{W}$ is a uniform random variable over $[K]$, and $\mathbf{S}$ is a uniform random variable over all $M$-subsets of ${[K]\setminus \{\mathrm{W}\}}$ given $\mathbf{W}=\mathrm{W}$;
(iii) The random variables $(\mathbf{W},\mathbf{S})$ and $(\mathbf{X}_{1},\dots,\mathbf{X}_K)$ are independent;
(iv) The distribution of $(\mathbf{W},\mathbf{S})$ is known in advance to all servers; and 
(v) The realization $(\mathrm{W},\mathrm{S})$ is not initially known to any server.
%\begin{itemize}
%\item[1.] $\mathbf{X}_1,\dots,\mathbf{X}_K$ are independent uniform random variables over $\mathbbmss{F}^L_{q}$. 
%That is, $H(\mathbf{X}_{\mathrm{I}})= |\mathrm{I}| B$ for $\mathrm{I}\subseteq [K]$, where $B:= L\log_2 q$ denotes the entropy of a uniform random variable over $\mathbbmss{F}^L_q$. 
%\item[2.] $\mathbf{W}$ is a uniform random variable over $[K]$, and $\mathbf{S}$ is a uniform random variable over all $M$-subsets of ${[K]\setminus \{\mathrm{W}\}}$ given $\mathbf{W}=\mathrm{W}$. 
%\item[3.] The random variables $(\mathbf{W},\mathbf{S})$ and $(\mathbf{X}_{1},\dots,\mathbf{X}_K)$ are independent. 
%\item[4.] The distribution of the random variable $(\mathbf{W},\mathbf{S})$ is known in advance to all servers. 
%\item[5.] The realization $(\mathrm{W},\mathrm{S})$ is not initially known to any server.
%\end{itemize}

For each ${n\in [N]}$, the user generates a query $\mathrm{Q}_n^{[\mathrm{W},\mathrm{S}]}$, which is a (deterministic or stochastic) function of $(\mathrm{W},\mathrm{S})$, and sends the query $\mathrm{Q}_n^{[\mathrm{W},\mathrm{S}]}$ to server $n$. 
%Since the user has no knowledge of the messages ${\{\mathrm{X}_i: i\in [K]\setminus \mathrm{S}\}}$, it is obvious that $\mathrm{Q}_n^{[\mathrm{W},\mathrm{S}]}$ cannot depend on any of these messages. 
%Conversely,  $\mathrm{Q}_n^{[\mathrm{W},\mathrm{S}]}$ may generally depend on $\mathrm{X}_{\mathrm{S}}$. 
%However, aligned with the existing literature, we consider queries that remain universally applicable across all message realizations, i.e., the query $\mathrm{Q}_n^{[\mathrm{W},\mathrm{S}]}$ is not a function of $\mathrm{X}_1,\dots,\mathrm{X}_K$. 
For each $n\in [N]$, it is required that the query $\mathrm{Q}_n^{[\mathrm{W},\mathrm{S}]}$ does not reveal any information about the demand's index $\mathrm{W}$ to server $n$. 
That is, for all $n\in [N]$, it must hold that 
$I(\mathbf{W}; \mathbf{Q}_n^{[\mathbf{W},\mathbf{S}]}) = 0$, or equivalently, 
\begin{equation*}
\mathbb{P}(\mathbf{Q}_n^{[\mathbf{W},\mathbf{S}]}=\mathrm{Q}_n^{[\mathrm{W},\mathrm{S}]}|\mathbf{W}=\mathrm{W}^{*})=\mathbb{P}(\mathbf{Q}_n^{[\mathbf{W},\mathbf{S}]}=\mathrm{Q}_n^{[\mathrm{W},\mathrm{S}]}) 
\end{equation*} for all ${\mathrm{W}^{*}\in [K]}$. 
We refer to this requirement as the \emph{privacy condition}. 
It should be noted that the privacy condition does not imply the privacy of the side information's index set $\mathrm{S}$. 
That is, the query $\mathrm{Q}_n^{[\mathrm{W},\mathrm{S}]}$ may leak some information about $\mathrm{S}$ to server $n$, while satisfying the privacy condition. 
This differs from the Private Information Retrieval with Private Side Information (PIR-PSI) setting considered in~\cite{CWJ2020,KH2021Journal,EH2024}, 
where both the demand's index $\mathrm{W}$ and the side information's index set $\mathrm{S}$ must be kept private from each server. 

Upon receiving the query $\mathrm{Q}_n^{[\mathrm{W},\mathrm{S}]}$, server $n$ generates an answer $\mathrm{A}_n^{[\mathrm{W},\mathrm{S}]}$, and sends it back to the user. 
The answer $\mathrm{A}_n^{[\mathrm{W},\mathrm{S}]}$ is a deterministic function of $\mathrm{Q}_n^{[\mathrm{W},\mathrm{S}]}$ and $\mathrm{X}_1,\dots,\mathrm{X}_K$.
That is, ${H(\mathbf{A}_n^{[\mathrm{W},\mathrm{S}]}|\mathbf{Q}_n^{[\mathrm{W},\mathrm{S}]},\mathbf{X}_1,\dots,\mathbf{X}_K)=0}$. 
The user must be able to recover their demand $\mathrm{X}_{\mathrm{W}}$ given the collection of answers
${\mathrm{A}^{[\mathrm{W},\mathrm{S}]}:=\{\mathrm{A}_n^{[\mathrm{W},\mathrm{S}]}: n\in [N]\}}$, the collection of queries ${\mathrm{Q}^{[\mathrm{W},\mathrm{S}]}:=\{\mathrm{Q}_n^{[\mathrm{W},\mathrm{S}]}: n\in [N]\}}$, the side information $\mathrm{X}_{\mathrm{S}}$, and the realizations $\mathrm{W}$ and $\mathrm{S}$. 
That is,
\[H(\mathbf{X}_{\mathrm{W}}| \mathbf{A}^{[\mathrm{W},\mathrm{S}]},\mathbf{Q}^{[\mathrm{W},\mathrm{S}]},\mathbf{X}_{\mathrm{S}})=0.\] 
We refer to this requirement as the \emph{recoverability condition}.

The PIR-SI problem is to design a protocol for generating a collection of queries $\mathrm{Q}^{[\mathrm{W},\mathrm{S}]}$ and the corresponding collection of answers $\mathrm{A}^{[\mathrm{W},\mathrm{S}]}$ for any given $(\mathrm{W},\mathrm{S})$ such that both the privacy and recoverability conditions are satisfied. 

We define the \emph{rate} of a PIR-SI scheme as the ratio of the number of bits required by the user, 
i.e., ${H(\mathbf{X}_{\mathrm{W}})=B:= L\log_2 q}$, to the expected number of bits downloaded from all servers, i.e., ${\sum_{n=1}^{N} H(\mathbf{A}^{[\mathrm{W},\mathrm{S}]}_n|\mathbf{Q}^{[\mathrm{W},\mathrm{S}]}_n)}$, where for each server, the expectation is taken over all query realizations corresponding to that server. 
Specifically, the expected number of bits downloaded from each server $n$, i.e., $H(\mathbf{A}^{[\mathrm{W},\mathrm{S}]}_n|\mathbf{Q}^{[\mathrm{W},\mathrm{S}]}_n)$, is given by 
\[\sum \mathbb{P}(\mathbf{Q}^{[\mathrm{W},\mathrm{S}]}_n = \mathrm{Q}^{[\mathrm{W},\mathrm{S}]}_n)H(\mathbf{A}^{[\mathrm{W},\mathrm{S}]}_n|\mathbf{Q}^{[\mathrm{W},\mathrm{S}]}_n = \mathrm{Q}^{[\mathrm{W},\mathrm{S}]}_n),\]
%\begin{align*}
%\sum \mathbb{P}(\mathbf{Q}^{[\mathrm{W},\mathrm{S}]}_n = \mathrm{Q}^{[\mathrm{W},\mathrm{S}]}_n)H(\mathbf{A}^{[\mathrm{W},\mathrm{S}]}_n|\mathbf{Q}^{[\mathrm{W},\mathrm{S}]}_n = \mathrm{Q}^{[\mathrm{W},\mathrm{S}]}_n),
%\end{align*} 
where the summation is over all query realizations $\mathrm{Q}^{[\mathrm{W},\mathrm{S}]}_n$.

It is important to highlight that we define rate based on the conditional entropy of each server's answer given the user's query sent to that server, i.e.,  ${H(\mathbf{A}^{[\mathrm{W},\mathrm{S}]}_n|\mathbf{Q}^{[\mathrm{W},\mathrm{S}]}_n)}$. 
This metric accurately captures the expected number of bits downloaded from server $n$, because the user's query sent to server $n$ is known to both the user and server $n$. 
In contrast, the (unconditional) entropy of the answer from each server, i.e., ${H(\mathbf{A}^{[\mathrm{W},\mathrm{S}]}_n)}$, 
which was used in previous works, e.g., in~\cite{SJ2017} and~\cite{SJ2016ArbitraryTIFS}, for rate calculation, 
takes also into account the randomness of the query. 
This can potentially introduce excessive uncertainty, which is irrelevant to the expected number of downloaded bits.
It should be noted that, regardless of which rate definition is used, the rate of the PIR-SI scheme presented in~\cite{KGHERS2020} remains unchanged. 
This is because, in these schemes, 
the length of the answer (i.e., the number of bits downloaded) from each server is the same for all query realizations. 
As a consequence, the entropy of the answer and the conditional entropy of the answer given the query are equal. 
However, this equivalence may not hold for PIR-SI schemes in which the answer length varies across different query realizations.

%A similar observation is also applicable to the definition of rate in the classical PIR problem. 
%In~\cite{SJ2017} and~\cite{SJ2016ArbitraryTIFS}, the rate of a PIR protocol is defined based on the entropy of the answer from each server, whereas in~\cite{TSC2019}, the rate of a PIR protocol is defined based on the conditional entropy of each server's answer given the user's query sent to that server. 
%Accordingly, the rate of the PIR schemes of~\cite{SJ2017} and~\cite{SJ2016ArbitraryTIFS} remains constant regardless of the rate definition used. 
%On the contrary, the PIR scheme of~\cite{TSC2019} achieves the maximum possible rate according to the latter definition, but it is not rate optimal when considering the former definition. 

In this work, our goal is to characterize the \emph{capacity} of PIR-SI, defined as the maximum achievable rate among all PIR-SI schemes, in terms of the number of servers $N$, the number of messages $K$, the number of side information messages $M$, the sub-packetization level $L$, and the field size $q$. 
We consider all values of ${N>1}$, ${K>1}$, ${1\leq M\leq K-1}$, and ${q\geq 2}$. 
Additionally, we focus on cases with ${1\leq L\leq N-1}$. 
Such small values of $L$ are particularly interesting when considering schemes that utilize linear operations at the sub-packet level and require a low sub-packetization level. 
%These schemes are of practical significance due to their advantageous characteristics in terms of computational complexity and communication overhead. 
%Among these schemes, the most appealing are arguably the \emph{one-shot} schemes, where each server's answer to the user's query comprises no more than one linear combination of the message sub-packets. 
%In such schemes, the user is limited to retrieving at most one sub-packet of the demand message from each server, and 
%hence, the sub-packetization level $L$ cannot exceed the number of servers $N$. 
%On the other hand, values of $L$ greater than $N$ become of interest for \emph{multi-shot} schemes that enable the user to recover multiple linear combinations of the message sub-packets from each server. 
%However, such schemes are beyond the scope of this work and are the subject of an ongoing research.

\section{Main Results}
In this section, we present our main results and discuss the existing bounds on the capacity of PIR-SI for all admissible values of $N$, $K$, $M$, $L$, and $q$. 
%Additionally, we review the results from prior work and provide new capacity bounds for the classical PIR problem which can be viewed as a special case of the PIR-SI problem when $M=0$.

To simplify the notation, we define
$g:=\lceil \frac{K}{M+1}\rceil$, and 
${r_k:=\prod_{j=1}^{k} \frac{K/(M+1)-j}{j}}$ for all ${k\in [g-1]}$.

%\begin{equation}\label{eq:g}
%g:=\left\lceil \frac{K}{M+1}\right\rceil,
%\end{equation} 
%and 
%\begin{equation}\label{eq:rk}
%r_k:=\prod_{j=1}^{k} \frac{K/(M+1)-j}{j}, \quad {k\in [g-1]}.
%\end{equation}

\begin{theorem}\label{thm:1}
The capacity of PIR-SI with $N$ servers, $K$ messages, $M$ side information messages, the sub-packetization level $L=N-1$, and the field size $q$, is lower bounded by 
\begin{equation}\label{eq:R}
R :=(N-1)\left(N- \left(1+\sum_{k=1}^{g-1}r_k (N-1)^k \right)^{-1}\right)^{-1}.
\end{equation}
%where $g$ and $r_k$'s are defined in~\eqref{eq:g} and~\eqref{eq:rk}, respectively. 
\end{theorem}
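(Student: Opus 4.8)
The plan is to exhibit an explicit PIR-SI scheme with sub-packetization $L=N-1$ whose rate equals the quantity $R$ in \eqref{eq:R}, which then lower-bounds the capacity. Since the identity of the side information need not be kept private, the scheme can be asymmetric: I would first have the user partition the index set $[K]$ into $g=\lceil K/(M+1)\rceil$ groups, each of size $M+1$ except possibly one smaller group, arranged so that the demand's index $\mathrm{W}$ and all $M$ side information indices $\mathrm{S}$ fall inside one common group. Randomizing the partition subject to this constraint is what enforces the privacy condition: from any single server's viewpoint, each index is equally likely to be the demand, because the random partition of $[K]\setminus\mathrm{S}$ into the group slots is symmetric over $[K]$ in the appropriate sense (this mirrors the ``super-message''/group idea of \cite{KGHERS2020} but without forcing equal download lengths across realizations).

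Next, within the demand group, the user needs only $\mathrm{X}_{\mathrm{W}}$ and already holds the other $M$ messages, so one carefully chosen $\mathbbmss{F}_q$-linear combination of the group's $M+1$ messages, downloaded per sub-packet, suffices to decode $\mathrm{X}_{\mathrm{W}}$ after subtracting off the known side information. For every other group the user holds nothing, so it must download enough independent linear combinations to recover the whole group's worth of useful information — but here is where the probabilistic method of Tian \emph{et al.}~\cite{TSC2019} enters: rather than downloading a fixed amount from each of the remaining $N-1$ servers, the user draws, for each server, a random subset of the groups to ``charge'' that server with, following a distribution tuned so that (a) across the $N$ servers the collected linear combinations span all sub-packets of $\mathrm{X}_{\mathrm{W}}$ with probability $1$, and (b) the marginal download from each server has the right expectation. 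The coefficients $r_k=\prod_{j=1}^{k}\frac{K/(M+1)-j}{j}$ are exactly the weights that arise from this random charging scheme — they count, in a generating-function sense, the ways of selecting $k$ of the $g-1$ non-demand groups, and the factor $(N-1)^k$ tracks the sub-packet bookkeeping across servers with $L=N-1$.

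The main computation is then the expected-download accounting: one server (call it the ``special'' server, chosen uniformly to preserve privacy) downloads a full $N-1$ symbols, and the behavior of the remaining servers is governed by the random subset sizes, so $\sum_{n=1}^N H(\mathbf{A}_n^{[\mathrm{W},\mathrm{S}]}\mid \mathbf{Q}_n^{[\mathrm{W},\mathrm{S}]})$ reduces, after using $B=(N-1)\log_2 q$, to $N-\bigl(1+\sum_{k=1}^{g-1} r_k(N-1)^k\bigr)^{-1}$ symbols' worth of $\mathbbmss{F}_q$-entropy; inverting this ratio gives $R$. I would verify the privacy condition first (a short symmetry argument over the random partition and the random choice of special server), then recoverability (a rank/spanning argument showing the union of the per-server linear combinations determines $\mathrm{X}_{\mathrm{W}}$, deterministically or with probability one over the coefficient choices, invoking a Schwartz–Zippel-type bound if a generic-coefficients argument is used and $q$ is taken large, or giving an explicit MDS-type construction when small $q$ must be handled), and finally the rate calculation.

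I expect the rate bookkeeping — getting the combinatorial sum $\sum_{k=1}^{g-1} r_k (N-1)^k$ to drop out of the expectation over the random charging distribution — to be the main obstacle, since it requires choosing the charging probabilities precisely so that the per-server expectations telescope into the closed form, while simultaneously not over- or under-counting the linear combinations needed for recoverability. The privacy argument is comparatively routine given the group-partition structure, and the recoverability argument is standard linear algebra once the combinations are fixed.
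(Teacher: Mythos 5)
Your proposal does not follow the paper's route, and as sketched it has a genuine gap: the backbone you describe---randomly partitioning $[K]$ into $g$ groups of size $M+1$ (one possibly smaller) with $\mathrm{W}$ and $\mathrm{S}$ in a common group, then retrieving one coded combination per group---is essentially the super-message scheme of \cite{KGHERS2020}, and that structure provably tops out at $R^{*}$ of \eqref{eq:Rstar}, which is strictly below $R$ whenever $M+1$ does not divide $K$ (Theorem~\ref{thm:2}). The extra rate in the paper comes precisely from abandoning the fixed group structure: the query sent to a server is a single sum whose support is $\mathrm{S}\cup\{\mathrm{W}\}$ together with a \emph{random number} $I(M+1)$ of interference messages (or, in the boundary case $I=g-1$, all of them plus only part of $\mathrm{S}$), with sub-packet indices chosen freely per message, and---crucially---with a \emph{non-uniform} distribution $P_k=P_0(N-1)^k r_k$ over $I$ chosen to equalize $\mathbb{P}(\mathbf{Q}_n=\mathrm{v}^{*}\mid \mathbf{W}=\mathrm{W}^{*})$ across all candidate demands $\mathrm{W}^{*}$. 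That equalization is the main technical content of the privacy proof, and it is exactly what your sketch waves off as ``a short symmetry argument over the random partition.'' Relatedly, your combinatorial reading of $r_k$ as counting selections of $k$ of the $g-1$ non-demand groups only holds when $(M+1)\mid K$ (then $r_k=\binom{g-1}{k}$); in the non-divisible case $r_k=\prod_{j=1}^{k}\frac{K/(M+1)-j}{j}$ is not an integer and does not count anything---it emerges from solving the privacy recursion, not from a charging/generating-function count.

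The download accounting is also off relative to the actual mechanism. In the paper's scheme every server returns at most one $\mathbbmss{F}_q$ symbol (a plain sum of one sub-packet per supported message, so no Schwartz--Zippel, large $q$, or MDS machinery is needed, and recoverability is a one-line subtraction $\mathrm{X}_{\mathrm{W},j}=\mathrm{Y}_{j+1}-\mathrm{Y}_1-(\text{known side-information sum})$); the saving over $N$ symbols comes from the event, of probability $P_0/N$ per server, that the query is the all-zero vector and nothing is downloaded, giving total expected download $(N-P_0)$ symbols and hence $R=(N-1)/(N-P_0)$. Your picture of one ``special'' server downloading a full $N-1$ symbols while the others are randomly ``charged'' does not correspond to any consistent accounting that produces $N-\bigl(1+\sum_{k=1}^{g-1}r_k(N-1)^k\bigr)^{-1}$, and you give no construction of the charging distribution or proof that it preserves privacy. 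To repair the proposal you would need to replace the group-partition skeleton with the randomized support-size construction and derive the $P_k$ from the privacy-equalization constraints; without that, the argument can at best re-derive $R^{*}$, not $R$.
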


To establish this result, we present a novel PIR-SI scheme that achieves the rate $R$ defined in~\eqref{eq:R}. 
The retrieval process in our scheme relies on linear operations at the sub-packet level, and requires recovering no more than one linear combination of message sub-packets from each server. 
The distinctive aspect of our scheme lies in its randomized strategy to identify the message sub-packets to be combined and determine the specific linear combination to be recovered from each server. 
This strategy is carefully designed to ensure both the privacy and recoverability conditions, while taking advantage of the fact that the side information messages do not require privacy. 

%Analyzing the achievable rate $R$ yields the following result. 

\begin{theorem}\label{thm:2}
With $R$ defined in~\eqref{eq:R} and $R^{*}$ defined as
\begin{equation}\label{eq:Rstar}
R^{*}:=\frac{N^{g}-N^{g-1}}{N^{g}-1},
\end{equation} %where $g$ is defined in~\eqref{eq:g}, 
it follows that $R\geq R^{*}$ for all values of $N$, $K$, and $M$.
In particular, we have ${R=R^{*}}$ when $K$ is divisible by $M+1$, and ${R>R^{*}}$ when $K$ is not divisible by $M+1$.
%\begin{itemize}
%\item[(i)] ${R=R^{*}}$ when $K$ is divisible by $M+1$; 
%\item[(ii)] ${R>R^{*}}$ when $K$ is not divisible by $M+1$.
%\end{itemize}
\end{theorem}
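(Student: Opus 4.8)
The plan is to collapse the whole comparison into a single scalar inequality and then prove that inequality by a termwise bound on the coefficients $r_k$. Write $S := 1 + \sum_{k=1}^{g-1} r_k (N-1)^k$, so that $R = (N-1)\,(N - S^{-1})^{-1}$. I will first note that every $r_k$ is strictly positive (justified below), so $S \ge 1$ and hence $0 < S^{-1} \le 1 < N$; this makes $R$ a strictly decreasing function of $S$ on $\{S \ge 1\}$. A direct computation shows that $R^{*}$ is exactly the value of this expression at $S = N^{g-1}$: indeed $(N-1)(N - N^{-(g-1)})^{-1} = (N-1)N^{g-1}/(N^{g}-1) = (N^{g}-N^{g-1})/(N^{g}-1) = R^{*}$. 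Consequently, Theorem~\ref{thm:2} is equivalent to the assertion
\[
S \;=\; 1 + \sum_{k=1}^{g-1} r_k (N-1)^k \;\le\; N^{g-1},
\]
with equality precisely when $M+1$ divides $K$.

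To prove this, set $\alpha := K/(M+1)$, so that $g = \lceil \alpha \rceil$ and hence $g-1 < \alpha \le g$. I would then observe that $r_k = \prod_{j=1}^{k}\frac{\alpha-j}{j} = \binom{\alpha-1}{k}$ is a generalized binomial coefficient, so $S = \sum_{k=0}^{g-1}\binom{\alpha-1}{k}(N-1)^{k}$. When $M+1 \mid K$, the quantity $\alpha = g$ is a nonnegative integer, $\binom{\alpha-1}{k} = \binom{g-1}{k}$, and the binomial theorem gives $S = (1+(N-1))^{g-1} = N^{g-1}$ exactly, so $R = R^{*}$; this argument also covers the degenerate case $g = 1$ (which forces $K = M+1$).

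For the general case I would compare $S$ against $N^{g-1} = \sum_{k=0}^{g-1}\binom{g-1}{k}(N-1)^{k}$ term by term. The key observation is that $\alpha > g-1$ implies $\alpha - j > (g-1)-j \ge 0$ for every $j \in [g-1]$, so each factor $\frac{\alpha-j}{j}$ is strictly positive; this simultaneously establishes $r_k > 0$ (used above) and makes the factorwise comparison legitimate. Since $\alpha \le g$ gives $0 < \frac{\alpha-j}{j} \le \frac{g-j}{j}$ for each $j$, multiplying these inequalities yields $r_k = \binom{\alpha-1}{k} \le \binom{g-1}{k}$ for every $k \in [0:g-1]$; multiplying by $(N-1)^{k}\ge 1$ and summing gives $S \le N^{g-1}$, hence $R \ge R^{*}$. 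For strictness when $\alpha < g$ (i.e.\ when $M+1 \nmid K$, which also forces $g \ge 2$), the $j=1$ factor satisfies $\alpha - 1 < g-1$, so $r_1 < \binom{g-1}{1}$, and since that term is weighted by $(N-1) \ge 1 > 0$ it alone makes the sum strictly smaller, giving $S < N^{g-1}$ and $R > R^{*}$.

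I expect the main obstacle to be bookkeeping rather than conceptual: one must check that the monotonicity reduction is sound, i.e.\ that $N - S^{-1}$ and $N - N^{-(g-1)}$ are both positive so the reciprocal inequality flips in the right direction, and one must be careful that the factorwise comparison rests entirely on the positivity of all factors $\alpha - j$ for $j \le g-1$, which in turn is a direct consequence of the definition $g = \lceil K/(M+1)\rceil$. The boundary cases $g = 1$ and $N = 2$ should be verified so that no empty product or division by zero is swept under the rug, but both are benign.
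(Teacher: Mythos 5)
Your proof is correct and follows essentially the same route as the paper: in the divisible case both use the binomial theorem to get $1+\sum_{k=1}^{g-1} r_k(N-1)^k = N^{g-1}$, and in the non-divisible case both compare $r_k$ termwise with $\binom{g-1}{k}$ and conclude via the monotone dependence of $R$ on $P_0 = S^{-1}$. Your version merely makes explicit a couple of points the paper leaves implicit (positivity of the factors $K/(M+1)-j$, and that strictness at the single term $k=1$ already suffices), which is fine.
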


\begin{theorem}\label{thm:3}
The capacity of PIR-SI with the same parameters as in Theorem~\ref{thm:1}, except when $L<N-1$, is lower bounded by ${R}_L$ defined as in~\eqref{eq:R}, except when $N$ is replaced by $L+1$.
\end{theorem}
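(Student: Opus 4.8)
The plan is to reduce the case $L < N-1$ to the case $L = N-1$ by showing that any subset of $L+1$ servers, used in isolation, already suffices to run the scheme of Theorem~\ref{thm:1} with parameter $N$ replaced by $L+1$. Concretely, I would fix $L+1$ of the $N$ available servers — say servers $1,\dots,L+1$ — and have the user send the full PIR-SI query protocol of Theorem~\ref{thm:1} (instantiated with the number of servers equal to $L+1$ and the sub-packetization level $L = (L+1)-1$) to those servers, while sending each of the remaining $N-L-1$ servers a query that is statistically independent of $(\mathrm{W},\mathrm{S})$ — for instance, a fixed dummy query, or a query asking for a uniformly random linear combination of all message sub-packets. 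Since the scheme of Theorem~\ref{thm:1} achieves rate $R_{L+1}$ (the expression in~\eqref{eq:R} with $N \mapsto L+1$), and the dummy servers contribute a download cost that does not depend on the realization, the overall scheme achieves at least the claimed rate $R_L := R_{L+1}$.

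The key steps, in order, are: (i) verify that the construction underlying Theorem~\ref{thm:1} is well-defined whenever the number of participating servers exceeds $1$, so that it can legitimately be run with $L+1 \geq 2$ servers and sub-packetization $L$; (ii) check the privacy condition server-by-server — for each of the $L+1$ active servers this is inherited verbatim from Theorem~\ref{thm:1}, and for each dummy server it holds trivially because its query is independent of $\mathbf{W}$ by construction, so $I(\mathbf{W};\mathbf{Q}_n^{[\mathbf{W},\mathbf{S}]})=0$; (iii) check recoverability — the user reconstructs $\mathrm{X}_{\mathrm{W}}$ from the answers of the $L+1$ active servers exactly as in the proof of Theorem~\ref{thm:1}, discarding (or simply never using) the dummy answers; and (iv) compute the rate, taking care to route \emph{zero} bits to the dummy servers so they add nothing to the expected download — i.e., send them a query that forces an empty answer, which is consistent with the answer being a deterministic function of the query. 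With that choice the expected total download equals that of the $(L+1)$-server scheme, giving rate $R_{L+1} = R_L$.

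A subtle point worth flagging: the rate in Theorem~\ref{thm:1} is defined using the \emph{conditional} entropy $H(\mathbf{A}_n^{[\mathrm{W},\mathrm{S}]}\mid \mathbf{Q}_n^{[\mathrm{W},\mathrm{S}]})$, so a dummy server that is instructed (as a function of its query alone) to return nothing contributes exactly $0$ to the denominator; there is no hidden cost from the randomness of the dummy queries. Hence the only thing that needs genuine checking is that \emph{ignoring} servers is legitimate under the problem's privacy model — and it is, precisely because privacy is a per-server constraint and side-information privacy is not required, so there is no need for the active server set or the query distribution to be symmetric across servers.

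The main obstacle I anticipate is not conceptual but bookkeeping: one must be careful that the privacy condition of Theorem~\ref{thm:1}, which is verified \emph{per server} in that proof, is genuinely invariant under restricting the server pool — i.e., that the marginal query distribution sent to each active server in the $(L+1)$-server scheme does not secretly depend on the existence of the other $N-L-1$ servers. Since in the construction the queries to different servers are generated by the randomized sub-packet-combination algorithm applied to a server set of size $L+1$, this is immediate, but it should be stated explicitly. A cleaner alternative, if one prefers to avoid ``empty answers,'' is to let each dummy server return one uniformly random symbol of $\mathbbmss{F}_q$ independent of everything; then the rate becomes $B / (\text{denominator of } R_{L+1}^{-1} \cdot B + (N-L-1)\log_2 q)$, which is \emph{strictly} smaller than $R_{L+1}$, so this alternative only proves a weaker bound — confirming that the empty-answer routing is the right choice and that the lemma as stated is tight for this reduction.
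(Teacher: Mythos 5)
Your proposal is correct and follows essentially the same route as the paper: the paper's (omitted) proof likewise has the user arbitrarily select $L+1$ servers and run the scheme of Theorem~\ref{thm:1} with $N$ replaced by $L+1$ and sub-packetization level $L$, yielding rate $R_L$. Your additional bookkeeping — null queries to the remaining $N-L-1$ servers contributing zero to the conditional-entropy download cost, and per-server privacy being unaffected by restricting the active server pool — is exactly the routine verification the paper omits for brevity.
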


The proof follows from a simple two-step application of our PIR-SI scheme, and is omitted for brevity. 
Here, we outline the key steps of the proof.
First, the user chooses ${L+1}$ servers arbitrarily. 
Then, the user employs the proposed scheme designed for the case with $L+1$ servers, $K$ messages, $M$ side information messages, and the sub-packetization level $L$, to privately retrieve the demand message with the help of the side information messages. 
This achieves the rate $R_L$ defined similarly to $R$ as in~\eqref{eq:R} when $N$ is replaced by $L+1$.

\begin{remark}\label{rem:1}\normalfont
Our PIR-SI scheme serves as a nontrivial extension of the PIR scheme presented in~\cite{TSC2019} to incorporate side information. 
In the PIR scheme of~\cite{TSC2019}, the space of query realizations includes a linear combination of any subset of messages, or more generally, a linear combination of their respective sub-packets with one sub-packet per message. 
However, in the PIR-SI setting, not all of these query realizations are necessary, and excluding certain query realizations can potentially increase the rate. 
Specifically, it is unnecessary to retrieve any side information message or any linear combination of any subset of side information messages. 
However, excluding only these query realizations can compromise privacy. 
%For example, 
%if retrieving any side information message directly is not allowed, the same restriction must be applied to the direct retrieval of the demand message; otherwise, the privacy condition is violated. 
This implies the need to exclude additional query realizations beyond those specifically related to side information messages. 
Moreover, in the PIR scheme of~\cite{TSC2019}, privacy is ensured by employing a uniform probability distribution over all query realizations. 
However, when some query realizations are excluded, it may be necessary to introduce a non-uniform probability distribution over the remaining ones to maintain privacy. 
Our PIR-SI scheme maximizes the rate while ensuring privacy by identifying query realizations to exclude and optimizing the probability distribution over the remaining ones.  
\end{remark}

\begin{remark}\label{rem:2}\normalfont
The best previously-known achievable rate for PIR-SI was reported in~\cite{KGHERS2020}.
The PIR-SI scheme presented in~\cite{KGHERS2020} is based on the following approach: 
(i) forming $g$ linear combinations of messages, referred to as the \emph{coded messages}, each of which is the sum of a distinct subset of messages, 
where one of these subsets contains only the demand message and a subset of the side information messages, and 
(ii) privately retrieving one of these $g$ coded messages, namely, the one containing the demand message, through the application of a capacity-achieving PIR scheme. 
As shown in~\cite{KGHERS2020}, 
%By employing the PIR scheme of~\cite{TSC2019}, which is the only existing PIR scheme achieving capacity when $L=N-1$, 
this scheme achieves the rate $R^{*}$ defined in~\eqref{eq:Rstar}.
%Comparing our achievable rate $R$ with the baseline achievable rate $R^{*}$ reveals an interesting observation. 
As evident from the result of Theorem~\ref{thm:2}, 
our scheme and the one in~\cite{KGHERS2020} achieve the same rate when $M+1$ divides $K$, yet remarkably, our scheme achieves a higher rate when $M+1$ does not divide $K$. 
The main difference between these schemes lies in determining which sub-packets of which messages must be combined in each query realization. 
In the scheme of~\cite{KGHERS2020}, the query realizations are linear combinations of only $g$ coded messages, each corresponding to a specific subset of messages. 
This is while our scheme incorporates many more subsets of messages to construct query realizations.
Moreover, the scheme of~\cite{KGHERS2020} operates exclusively on the sub-packets of the underlying coded messages, where the $l$th sub-packet of each coded message is the sum of the $l$th sub-packet of the corresponding messages. 
In contrast, our scheme goes beyond this limitation by allowing the combination of sub-packets with different indices for any subset of messages. 
While this flexibility does not result in a rate increase when $K$ is divisible by $M+1$, its advantage becomes apparent when $K$ is not divisible by $M+1$. 
Specifically, this flexibility allows for utilizing a more efficient probability distribution over query realizations, leading to an improved rate performance. 
\end{remark}

\begin{remark}\label{rem:3}\normalfont
The capacity of PIR-SI is obviously upper bounded by $1$, 
which coincides with the achievable rate $R$ when $M=K-1$. 
However, a nontrivial upper bound remains unknown for cases where $1\leq M\leq K-2$.
It was reported in~\cite{LG2020CISS} that the capacity is upper bounded by $R^{*}$.   
While $R^{*}$ matches $R$ when $K$ is divisible by $M+1$, 
$R^{*}$ is strictly less than $R$ when $K$ is not divisible by $M+1$.
This shows that $R^{*}$ is not a valid upper bound on the capacity in non-divisible cases. 
Moreover, it remains unknown whether $R^{*}$ is indeed the capacity in divisible cases or there exists a scheme that can achieve a rate higher than $R^{*}$ in such cases.  
\end{remark}

\section{Proof of Theorem~\ref{thm:1}}
In this section, we present a new PIR-SI scheme that achieves the rate $R$ defined in~\eqref{eq:R} 
for any admissible set of values of $N$, $K$, $M$, and $q$, when $L=N-1$. 
%\subsection{Proposed Scheme}

The proposed scheme consists of two steps as outlined below. 

%\vspace{0.125cm}
\textbf{Step 1:} 
Given the demand's index $\mathrm{W}$ and the side information's index set $\mathrm{S}$, the user generates 
$N$ queries $\mathrm{Q}^{[\mathrm{W},\mathrm{S}]}_1,\dots,\mathrm{Q}^{[\mathrm{W},\mathrm{S}]}_N$, and sends the query $\mathrm{Q}^{[\mathrm{W},\mathrm{S}]}_n$ to server $n$. 

The user's query to each server must specify which sub-packets of which messages are to be combined and what specific combination coefficients are to be used to generate the corresponding answer.
In the proposed scheme, each server's answer is either \emph{null}, meaning that the server does not send an answer back to the user, or composed of a \emph{sum} that includes exactly one sub-packet of every message within a designated subset of messages. 
Thus, the user's query to each server can be represented by a vector $\mathrm{v}\in {[0:N-1]^K}$, 
where an all-zero vector corresponds to a null answer, and any nonzero vector corresponds to a specific sum. 
Specifically, if $\mathrm{v}(i)$ is zero, it means that none of the sub-packets $\mathrm{X}_{i,1},\dots,\mathrm{X}_{i,N-1}$ contributes to the sum that forms the server's answer, and 
if $\mathrm{v}(i)$ is equal to $j\neq 0$, it means that the sub-packet $\mathrm{X}_{i,j}$ contributes to the sum forming the answer of the server. 

To generate the queries $\mathrm{Q}^{[\mathrm{W},\mathrm{S}]}_1,\dots,\mathrm{Q}^{[\mathrm{W},\mathrm{S}]}_N$, 
the user employs a randomized algorithm---which depends on $\mathrm{W}$ and $\mathrm{S}$---to construct $N$ vectors $\mathrm{v}_1,\dots,\mathrm{v}_N\in [0:N-1]^K$, and 
sends the vector $\tilde{\mathrm{v}}_n:=\mathrm{v}_{\pi(n)}$ to server $n$ as the query $\mathrm{Q}^{[\mathrm{W},\mathrm{S}]}_n$, 
where ${\pi:[N]\rightarrow [N]}$ is a randomly chosen permutation of $[N]$. 
The algorithm for constructing $\mathrm{v}_1,\dots,\mathrm{v}_N$ is described below.

Let $g := \lceil K/(M+1)\rceil$, and $r_k:=\prod_{j=1}^{k} \frac{K/(M+1)-j}{j}$ for all ${k\in [g-1]}$. 
First, the user randomly selects an integer, denoted by $I$, from $[0:g-1]$, where the probability of selecting each $k\in [0:g-1]$ is equal to $P_k$, where\vspace{-0.088cm} 
\[P_0 = \left(1+\sum_{k=1}^{g-1} r_k (N-1)^k \right)^{-1}\] and $P_k=P_0(N-1)^k r_k$ for all $k\in [g-1]$.
%for all $k\in [g-1]$. 
%where $r_k$'s are defined in~\eqref{eq:rk}.   
Given the value of $I$, the user then proceeds as follows:

\vspace{0.125cm}
\subsubsection{$I\neq g-1$}
First, the user randomly selects a vector ${\mathrm{a}\in [0:N-1]^{K}}$ with 
${\mathrm{supp}(\mathrm{a})\subset [K]\setminus \{\mathrm{W}\}\cup \mathrm{S}}$ and ${|\mathrm{supp}(\mathrm{a})| = I(M+1)}$.
If $I=0$, the vector $\mathrm{a}$ is all-zero. 
Otherwise, if $I\neq 0$, 
the vector $\mathrm{a}$ corresponds to a sum of ${I(M+1)}$ sub-packets, each belonging to one message from the set of interference messages, i.e., messages that are neither demand nor side information. 

Next, the user randomly selects a vector ${\mathrm{b}\in [0:N-1]^{K}}$ with $\mathrm{supp}(\mathrm{b}) = \mathrm{S}$. 
The vector $\mathrm{b}$ corresponds to a sum of $M$ sub-packets, each belonging to one side information message.  
Then, the user forms the vectors $\mathrm{v}_1,\dots,\mathrm{v}_N$ by setting 
\[\mathrm{v}_1 = \mathrm{a}, \text{ and } \mathrm{v}_{n+1} = \mathrm{a}+\mathrm{b}+\mathrm{c}_n, \quad \forall n\in [N-1],\]
where $\mathrm{c}_j\in [0:N-1]^{K}$ represents the vector corresponding to the $j$th sub-packet of the demand message $\mathrm{X}_{\mathrm{W}}$. 
That is, 
$\mathrm{c}_j(i)=0$ for $i\in [K]\setminus \{\mathrm{W}\}$, and $\mathrm{c}_j(\mathrm{W})=j$. 

\vspace{0.125cm}
\subsubsection{$I=g-1$}
First, the user randomly selects a vector ${\mathrm{a}\in [0:N-1]^{K}}$ with 
$\mathrm{supp}(\mathrm{a})= [K]\setminus \{\mathrm{W}\}\cup \mathrm{S}$.
The vector $\mathrm{a}$ corresponds to a sum of ${K-M-1}$ sub-packets, each belonging to one interference message. 

Next, the user randomly selects a vector $\mathrm{b}_1\in [0:N-1]^{K}$ with 
$\mathrm{supp}(\mathrm{b}_1) \subseteq \mathrm{S}$ and  
$|\mathrm{supp}(\mathrm{b}_1)|=g(M+1)-K$. 
The user then randomly selects a vector ${\mathrm{b}_2\in [0:N-1]^{K}}$ with ${\mathrm{supp}(\mathrm{b}_2) = \mathrm{S}\setminus \mathrm{supp}(\mathrm{b}_1)}$. 
Note that $|\mathrm{supp}(\mathrm{b}_2)|=K-(g-1)(M+1)-1$. 
The vectors $\mathrm{b}_1$ and $\mathrm{b}_2$ respectively correspond to a sum of ${g(M+1)-K}$ and ${K-(g-1)(M+1)-1}$ sub-packets, 
each belonging to one side information message.  
Then, the user forms the vectors $\mathrm{v}_1,\dots,\mathrm{v}_N$ by setting
\[\mathrm{v}_1 = \mathrm{a}+\mathrm{b}_1, \text{ and } \mathrm{v}_{n+1} = \mathrm{a}+\mathrm{b}_1+\mathrm{b}_2+\mathrm{c}_{n}, \quad \forall n\in [N-1],\]
where $\mathrm{c}_1,\dots,\mathrm{c}_{N-1}$ are defined as before.  

\vspace{0.125cm}
\textbf{Step 2:} Upon receiving the query $\mathrm{Q}^{[\mathrm{W},\mathrm{S}]}_n = \tilde{\mathrm{v}}_n$ ($=\mathrm{v}_{\pi(n)}$), 
if the vector $\tilde{\mathrm{v}}_n$ is nonzero, 
server $n$ computes the sum \[\tilde{\mathrm{Y}}_n := \sum_{i\in \mathrm{supp}(\tilde{\mathrm{v}}_n)} \mathrm{X}_{i,\tilde{\mathrm{v}}_n(i)},\] and 
sends back $\tilde{\mathrm{Y}}_n$ to the user as the answer $\mathrm{A}^{[\mathrm{W},\mathrm{S}]}_n$. 
Otherwise, if the vector $\tilde{\mathrm{v}}_n$ is all-zero, 
server $n$ does not send back any answer to the user, and 
the user sets $\tilde{\mathrm{Y}}_n = 0$. 

\begin{lemma}\label{lem:Main}
The proposed scheme satisfies both the privacy and recoverability conditions and achieves the rate $R$ in~\eqref{eq:R}.  
\end{lemma}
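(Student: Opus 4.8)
I would prove the three parts of Lemma~\ref{lem:Main}---recoverability, the rate, and privacy---separately; one should first note that the scheme is well defined, since all the prescribed supports have feasible sizes, which reduces to the elementary inequality $(g-1)(M+1)\le K-1$. For recoverability, the user knows the permutation $\pi$, the realization $(\mathrm{W},\mathrm{S})$, the integer $I$, and the side information $\mathrm{X}_{\mathrm{S}}$, so it can tell which $\mathrm{v}_i$ was answered by which server; writing $Y_i$ for the answer corresponding to $\mathrm{v}_i$ (and $Y_i=0$ when $\mathrm{v}_i$ is the all-zero vector), the pairwise disjointness of $\mathrm{supp}(\mathrm{a})$, $\mathrm{S}$ and $\{\mathrm{W}\}$ gives, for $I\ne g-1$, $Y_1=\sum_{i\in\mathrm{supp}(\mathrm{a})}\mathrm{X}_{i,\mathrm{a}(i)}$ and $Y_{n+1}=Y_1+\sum_{i\in\mathrm{S}}\mathrm{X}_{i,\mathrm{b}(i)}+\mathrm{X}_{\mathrm{W},n}$ for every $n\in[N-1]$; since the middle sum is computable from $\mathrm{X}_{\mathrm{S}}$, the user recovers $\mathrm{X}_{\mathrm{W},1},\dots,\mathrm{X}_{\mathrm{W},N-1}$, i.e.\ all of $\mathrm{X}_{\mathrm{W}}$. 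The case $I=g-1$ is identical once the known contributions of $\mathrm{b}_1$ and $\mathrm{b}_2$ are also subtracted.

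For the rate, server $n$ returns a null answer exactly when $\tilde{\mathrm{v}}_n$ is the all-zero vector; the only all-zero vector among $\mathrm{v}_1,\dots,\mathrm{v}_N$ is $\mathrm{v}_1$, and only when $I=0$, so this happens with probability $\mathbb{P}(I=0)\,\mathbb{P}(\pi(n)=1)=P_0/N$. In every other case the answer is a sum of distinct, mutually independent, uniform $\mathbbmss{F}_q$-symbols, hence uniform on $\mathbbmss{F}_q$, so $H(\mathbf{A}_n^{[\mathrm{W},\mathrm{S}]}\mid\mathbf{Q}_n^{[\mathrm{W},\mathrm{S}]})=(1-P_0/N)\log_2 q$ for each $n$. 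The expected total download is thus $(N-P_0)\log_2 q$, and dividing $B=(N-1)\log_2 q$ by it and substituting $P_0=\bigl(1+\sum_{k=1}^{g-1}r_k(N-1)^k\bigr)^{-1}$ yields exactly the rate $R$ of~\eqref{eq:R}.

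The substantive part is privacy. By the symmetry of the uniformly random permutation $\pi$, the query $\mathbf{Q}_n^{[\mathbf{W},\mathbf{S}]}$ is distributed as a uniformly random entry of the (random) tuple $(\mathrm{v}_1,\dots,\mathrm{v}_N)$, so it suffices to show that the conditional law of such an entry given $\mathbf{W}=w$ does not depend on $w$. The structural key is a dichotomy: $\mathrm{v}_1$ never has coordinate $\mathrm{W}$ in its support, whereas $\mathrm{v}_2,\dots,\mathrm{v}_N$ all do, with pairwise distinct values $1,\dots,N-1$ there; hence for a fixed target query $\mathrm{q}$ with support $T$, the event $\mathbf{Q}_n^{[\mathbf{W},\mathbf{S}]}=\mathrm{q}$ can occur only through $\mathrm{v}_1$ if $w\notin T$ and only through $\mathrm{v}_{\mathrm{q}(w)+1}$ if $w\in T$. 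I would then split on $|T|$, the only realizable values being $\{0,M+1,2(M+1),\dots,(g-1)(M+1),K\}$. For a size $|T|=j(M+1)$ with $1\le j\le g-2$, the $w\notin T$ probability is the product, times $1/N$, of $P_j$, the probability that $\mathbf{S}$ avoids $T$, the probability that $\mathrm{supp}(\mathrm{a})$ equals $T$, and the probability that the coefficients of $\mathrm{a}$ match $\mathrm{q}$; the $w\in T$ probability is the analogous product with $I=j-1$, with $\mathbf{S}\subseteq T\setminus\{w\}$, and with the coefficients of both $\mathrm{a}$ and $\mathrm{b}$ matched. Each expression is manifestly invariant under relabeling $w$ inside its class, so privacy reduces to equating the two, which after cancellation of the common binomial and power factors is precisely $P_j/P_{j-1}=(N-1)(K-j(M+1))/(j(M+1))$---i.e.\ the recursion $P_k=P_0(N-1)^k r_k$ with $r_k=\prod_{i=1}^k\frac{K/(M+1)-i}{i}$ that the scheme is built on. The boundary sizes fall to the same bookkeeping: $|T|=0$ is the null query (probability $P_0/N$); $|T|=K$ forces $I=g-1$ and a $\mathrm{v}_{\mathrm{q}(w)+1}$-realization of probability $P_{g-1}(N-1)^{-(K-1)}/N$; and $|T|=(g-1)(M+1)$ is the unique size at which the $\mathrm{v}_1$-realization must use $I=g-1$---so that $\mathrm{v}_1=\mathrm{a}+\mathrm{b}_1$ and its support spills into $\mathrm{S}$ by $g(M+1)-K$ coordinates---after which a short binomial simplification identifies the cross-class identity with $r_{g-1}/r_{g-2}=(K/(M+1)-(g-1))/(g-1)$. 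The degenerate case $g=1$ (i.e.\ $M=K-1$) has $P_0=1$, $R=1$ and is immediate.

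The step I expect to be the real obstacle is this last one, the privacy bookkeeping---and within it the irregular support size $(g-1)(M+1)$, where $\mathrm{v}_1$ borrows side-information coordinates: one must correctly enumerate, for every query support, the unique in-support and out-of-support realizations, keep the combinatorial factors (for choosing the supports and for matching the coefficients) exact, and then recognize the resulting identities as exactly the recursion defining the $P_k$. Recoverability and the rate computation are, by comparison, routine.
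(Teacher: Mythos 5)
Your proposal follows essentially the same route as the paper's own proof: recoverability by differencing $\mathrm{Y}_{j+1}-\mathrm{Y}_1$ and removing the known side-information sum, rate via the null-query probability $P_0/N$ together with uniformity of every nonzero answer, and privacy by conditioning on $\mathbf{W}=w$, splitting on whether $w$ lies in the query support and on the support size (with the special size $(g-1)(M+1)$ handled through $\mathrm{v}_1=\mathrm{a}+\mathrm{b}_1$), and verifying that the resulting balance equations are exactly the recursion $P_k=P_0(N-1)^k r_k$ used by the scheme. The only cosmetic slip is in the $I=g-1$ recovery step, where the $\mathrm{b}_1$ contribution cancels in $\mathrm{Y}_{j+1}-\mathrm{Y}_1$ so that only the $\mathrm{b}_2$ sum needs to be subtracted; since both sums are computable from $\mathrm{X}_{\mathrm{S}}$, this does not affect correctness.
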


\begin{proof}
The proof is given in Appendix. 
\end{proof}

\section{Proof of Theorem~\ref{thm:2}}
We need to show that (i) $R = R^{*}$ when $K$ is divisible by $M+1$, and (ii) $R>R^{*}$ when $K$ is not divisible by $M+1$, where $R$ and $R^{*}$ are defined in~\eqref{eq:R} and~\eqref{eq:Rstar}, respectively.

First, consider the case (i). 
In this case, $g = K/(M+1)$. 
Thus, 
$r_k = \prod_{j=1}^{k} \frac{g-j}{j} = \binom{g-1}{k}$ for all $k\in [g-1]$. 
Moreover, by the binomial theorem, $1+\sum_{k=1}^{g-1} \binom{g-1}{k}(N-1)^k = N^{g-1}$. 
Thus, it follows that $P_0 = N^{1-g}$,
%\begin{align*}
%P_0 = \left(1+\sum_{k=1}^{g-1} \binom{g-1}{k}(N-1)^k \right)^{-1} = N^{1-g},
%\end{align*} 
which implies that 
\begin{align*}
R=\frac{N-1}{N-P_0} = \frac{N-1}{N-N^{1-g}} = \frac{N^g-N^{g-1}}{N^g-1} = R^{*}.
\end{align*}

Now, consider the case (ii). In this case, $g>K/(M+1)$. 
Thus, $r_k = \prod_{j=1}^k \frac{K/(M+1)-j}{j}<\prod_{j=1}^{k} \frac{g-j}{j} = \binom{g-1}{k}$ for all $k\in [g-1]$. 
This implies that $\sum_{k=1}^{g-1} r_k (N-1)^k < \sum_{k=1}^{g-1} \binom{g-1}{k} (N-1)^k$, and consequently,  
\begin{align*}
P_0 & = \left(1+\sum_{k=1}^{g-1} r_k (N-1)^k \right)^{-1}\\
& > \left(1+\sum_{k=1}^{g-1} \binom{g-1}{k} (N-1)^k \right)^{-1} = N^{1-g},
\end{align*} implying that 
\begin{align*}
R=\frac{N-1}{N-P_0} > \frac{N-1}{N-N^{1-g}} = \frac{N^g-N^{g-1}}{N^g-1} = R^{*}.
\end{align*}

\section{An Illustrative Example} 

Consider $N=3$ servers, each storing $K=3$ messages ${\mathrm{X}_1,\mathrm{X}_2,\mathrm{X}_3}$, where each message $\mathrm{X}_i$ consists of $L=N-1=2$ sub-packets $\mathrm{X}_{i,1}$ and $\mathrm{X}_{i,2}$. 
Suppose that a user wants to retrieve one message, say, $\mathrm{X}_1$, and the user knows $M=1$ other message, say, $\mathrm{X}_2$. 
That is, $\mathrm{W}=1$ and  $\mathrm{S}=\{2\}$. 

To generate the $N=3$ queries, one for each server, the user proceeds as follows to construct $N=3$ vectors $\mathrm{v}_1,\mathrm{v}_2,\mathrm{v}_3\in [0:N-1]^K=[0:2]^3$,  
each of length ${K=3}$ with components from ${[0:N-1]=\{0,1,2\}}$.

Noting that ${g=\lceil K/(M+1)\rceil =2}$, the user begins by randomly selecting $I\in [0:g-1]=\{0,1\}$, 
where the probability of $I=0$ (or $I=1$) is $P_0$ (or $P_1$). 
In this example, $P_0 = (1+\sum_{k=1}^{g-1} r_k (N-1)^k)^{-1} = \frac{1}{2}$ and $P_1 = 1-P_0 = \frac{1}{2}$. 

If $I=0$, the user sets $\mathrm{a} = [0,0,0]$ and randomly selects  $\mathrm{b}\in\{[0,1,0],[0,2,0]\}$. 
Note that 
(i) $\mathrm{supp}(\mathrm{a}) = \emptyset$, and 
(ii) $\mathrm{supp}(\mathrm{b}) = \mathrm{S} = \{2\}$.
Otherwise, if ${I=1}$, 
the user randomly selects ${\mathrm{a}\in\{[0,0,1],[0,0,2]\}}$ and $\mathrm{b}_1\in{\{[0,1,0],[0,2,0]\}}$, and 
sets ${\mathrm{b}_2=[0,0,0]}$. 
Note that
(i) $\mathrm{supp}(\mathrm{a}) = {[K]\setminus \{\mathrm{W}\}\cup \mathrm{S}} = \{3\}$, 
(ii) $\mathrm{supp}(\mathrm{b}_1) \subseteq \mathrm{S} = \{2\}$ and $|\mathrm{supp}(\mathrm{b}_1)| = g(M+1)-K=1$, and 
(iii) $\mathrm{supp}(\mathrm{b}_2) = {\mathrm{S}\setminus \mathrm{supp}(\mathrm{b}_1)} = \emptyset$. 

Given the vectors $\mathrm{a}$ and $\mathrm{b}$ (or $\mathrm{b}_1$ and $\mathrm{b}_2$) and the ${N-1=2}$ vectors ${\mathrm{c}_1 = [1,0,0]}$ and ${\mathrm{c}_2 = [2,0,0]}$ with $\mathrm{supp}(\mathrm{c}_1)=\mathrm{supp}(\mathrm{c}_2)=\{\mathrm{W}\} = \{1\}$, 
the user constructs the vectors
$\mathrm{v}_1,\mathrm{v}_2,\mathrm{v}_3$ 
by using the algorithm described in Step~1. 
In this example, there are six feasible sets of vectors $\{\mathrm{v}_1, \mathrm{v}_2, \mathrm{v}_3\}$ in total, with two sets for $I=0$ and the remaining four sets for $I=1$.
The table presented below enumerates all six sets of vectors $\{\mathrm{v}_1, \mathrm{v}_2, \mathrm{v}_3\}$ along with their corresponding probabilities $P_{\{\mathrm{v}_1,\mathrm{v}_2,\mathrm{v}_3\}}$, where the first two rows correspond to $I=0$, and the last four rows correspond to $I=1$.%\vspace{-0.125cm}  

{\renewcommand{\arraystretch}{1.5}
\begin{table}[h]
    \centering
    %\caption{}
    \scalebox{1}{
    \begin{tabular}{|c|c|c|c|}
    %\multicolumn{4}{c}{$\mathrm{W}=1,\mathrm{S}=\{2\}$}\\
    \hline
    $\mathrm{v}_{1}$ & $\mathrm{v}_{2}$ & $\mathrm{v}_{3}$  &   $P_{\{\mathrm{v}_1,\mathrm{v}_2,\mathrm{v}_3\}}$ \\  
    \hline
    $[0,0,0]$ & $[1,1,0]$ & $[2,1,0]$ & ${P_0}/{2}$\\ 
    \hline
    $[0,0,0]$ & $[1,2,0]$ & $[2,2,0]$ & ${P_0}/{2}$\\ 
    \hline
    $[0,1,1]$ & $[1,1,1]$ & $[2,1,1]$ & ${P_1}/{4}$\\ 
    \hline
    $[0,2,1]$ & $[1,2,1]$ & $[2,2,1]$ & ${P_1}/{4}$\\ 
    \hline
    $[0,1,2]$ & $[1,1,2]$ & $[2,1,2]$ & ${P_1}/{4}$\\ 
    \hline
    $[0,2,2]$ & $[1,2,2]$ & $[2,2,2]$ & ${P_1}/{4}$\\ 
    \hline
    \end{tabular}
    }
    \label{example}
\end{table}
}

Followed by constructing $\mathrm{v}_1,\mathrm{v}_2,\mathrm{v}_3$, 
the user randomly selects a permutation $\pi: [3]\rightarrow [3]$, and sends the query vector $\mathrm{v}_{\pi(n)}$ to server $n$.
Upon receiving ${\mathrm{v}}_{\pi(n)}$, 
if ${\mathrm{v}}_{\pi(n)}$ is nonzero, 
server $n$ sends back the answer sum ${\mathrm{Y}}_{\pi(n)} = \sum_{i\in \mathrm{supp}({\mathrm{v}}_{\pi(n)})}\mathrm{X}_{i,{\mathrm{v}}_{\pi(n)}(i)}$; 
otherwise, 
if ${\mathrm{v}}_{\pi(n)}$ is all-zero, 
no answer is sent back, and ${\mathrm{Y}_{\pi(n)}=0}$. 
In this example, there are six feasible sets of sums $\{\mathrm{Y}_1,\mathrm{Y}_2,\mathrm{Y}_3\}$, one for each set of vectors $\{\mathrm{v}_1, \mathrm{v}_2, \mathrm{v}_3\}$. 
Table~\ref{tab:1} lists all six sets of sums $\{\mathrm{Y}_1,\mathrm{Y}_2,\mathrm{Y}_3\}$ and their probabilities $P_{\{\mathrm{Y}_1,\mathrm{Y}_2,\mathrm{Y}_3\}}$. 
For completeness, we also present Table~\ref{tab:2} which lists all six sets of sums $\{\mathrm{Y}_1,\mathrm{Y}_2,\mathrm{Y}_3\}$ and their probabilities $P_{\{\mathrm{Y}_1,\mathrm{Y}_2,\mathrm{Y}_3\}}$, for the general case of $\mathrm{W} = i$, $\mathrm{S} = \{j\}$, and $[K]\setminus \{\mathrm{W}\}\cup \mathrm{S} = [3]\setminus \{i,j\} = \{k\}$, where $\{i,j,k\}$ is an arbitrary permutation of $[3]$. 

{\renewcommand{\arraystretch}{1.25}
\begin{table}[t]
    \centering
    \caption{$\mathrm{W}=1,\mathrm{S}=\{2\}$\vspace{-0.25cm}}\label{tab:1}
    \scalebox{1}{
    \begin{tabular}{|c|c|c|c|}
    %\multicolumn{4}{c}{$\mathrm{W}=1,\mathrm{S}=\{2\}$}\\
    \hline
    $\mathrm{Y}_{1}$ & $\mathrm{Y}_{2}$ & $\mathrm{Y}_{3}$  &   $P_{\{\mathrm{Y}_1,\mathrm{Y}_2,\mathrm{Y}_3\}}$ \\  
    \hline
    $0$ & $\mathrm{X}_{1,1}+\mathrm{X}_{2,1}$ & $\mathrm{X}_{1,2}+\mathrm{X}_{2,1}$ & ${P_0}/{2}$\\ 
    \hline
    $0$ & $\mathrm{X}_{1,1}+\mathrm{X}_{2,2}$ & $\mathrm{X}_{1,2}+\mathrm{X}_{2,2}$ & ${P_0}/{2}$\\ 
    \hline
    $\mathrm{X}_{2,1}+\mathrm{X}_{3,1}$ & $\mathrm{X}_{1,1}+\mathrm{X}_{2,1}+\mathrm{X}_{3,1}$ & $\mathrm{X}_{1,2}+\mathrm{X}_{2,1}+\mathrm{X}_{3,1}$ & ${P_1}/{4}$\\ 
    \hline
    $\mathrm{X}_{2,2}+\mathrm{X}_{3,1}$ & $\mathrm{X}_{1,1}+\mathrm{X}_{2,2}+\mathrm{X}_{3,1}$ & $\mathrm{X}_{1,2}+\mathrm{X}_{2,2}+\mathrm{X}_{3,1}$ & ${P_1}/{4}$\\ 
    \hline
    $\mathrm{X}_{2,1}+\mathrm{X}_{3,2}$ & $\mathrm{X}_{1,1}+\mathrm{X}_{2,1}+\mathrm{X}_{3,2}$ & $\mathrm{X}_{1,2}+\mathrm{X}_{2,1}+\mathrm{X}_{3,2}$ & ${P_1}/{4}$\\ 
    \hline
    $\mathrm{X}_{2,2}+\mathrm{X}_{3,2}$ & $\mathrm{X}_{1,1}+\mathrm{X}_{2,2}+\mathrm{X}_{3,2}$ & $\mathrm{X}_{1,2}+\mathrm{X}_{2,2}+\mathrm{X}_{3,2}$ & ${P_1}/{4}$\\ 
    \hline
    \end{tabular}
    }\vspace{-0.15cm}
    %\label{example}
\end{table}
}

%The following tables correspond to different instances of $\mathrm{W}$ and $\mathrm{S}$ and are included for completeness. 

{\renewcommand{\arraystretch}{1.25}
\begin{table}[t]
    \centering
    \caption{$\mathrm{W}=i,\mathrm{S}=\{j\}$\vspace{-0.25cm}}\label{tab:2}
    \scalebox{1}{
    \begin{tabular}{|c|c|c|c|}
    %\multicolumn{4}{c}{$\mathrm{W}=1,\mathrm{S}=\{3\}$}\\
    \hline
    $\mathrm{Y}_{1}$ & $\mathrm{Y}_{2}$ & $\mathrm{Y}_{3}$  &   $P_{\{\mathrm{Y}_1,\mathrm{Y}_2,\mathrm{Y}_3\}}$ \\  
    \hline
    $0$ & $\mathrm{X}_{i,1}+\mathrm{X}_{k,1}$ & $\mathrm{X}_{i,2}+\mathrm{X}_{k,1}$ & ${P_0}/{2}$\\ 
    \hline
    $0$ & $\mathrm{X}_{i,1}+\mathrm{X}_{k,2}$ & $\mathrm{X}_{i,2}+\mathrm{X}_{k,2}$ & ${P_0}/{2}$\\ 
    \hline
    $\mathrm{X}_{j,1}+\mathrm{X}_{k,1}$ & $\mathrm{X}_{i,1}+\mathrm{X}_{j,1}+\mathrm{X}_{k,1}$ & $\mathrm{X}_{i,2}+\mathrm{X}_{j,1}+\mathrm{X}_{k,1}$ & ${P_1}/{4}$\\ 
    \hline
    $\mathrm{X}_{j,2}+\mathrm{X}_{k,1}$ & $\mathrm{X}_{i,1}+\mathrm{X}_{j,2}+\mathrm{X}_{k,1}$ & $\mathrm{X}_{i,2}+\mathrm{X}_{j,2}+\mathrm{X}_{k,1}$ & ${P_1}/{4}$\\ 
    \hline
    $\mathrm{X}_{j,1}+\mathrm{X}_{k,2}$ & $\mathrm{X}_{i,1}+\mathrm{X}_{j,1}+\mathrm{X}_{k,2}$ & $\mathrm{X}_{i,2}+\mathrm{X}_{j,1}+\mathrm{X}_{k,2}$ & ${P_1}/{4}$\\ 
    \hline
    $\mathrm{X}_{j,2}+\mathrm{X}_{k,2}$ & $\mathrm{X}_{i,1}+\mathrm{X}_{j,2}+\mathrm{X}_{k,2}$ & $\mathrm{X}_{i,2}+\mathrm{X}_{j,2}+\mathrm{X}_{k,2}$ & ${P_1}/{4}$\\ 
    \hline
    \end{tabular}
    }\vspace{-0.3cm}
    %\label{example}
\end{table}
}

Consider an arbitrary $n\in [3]$, and suppose that server $n$ receives the query vector $\mathrm{v}^{*}$ from the user. 
To verify that the privacy condition is met, we need to show that ${\mathbb{P}(\mathbf{Q}_n = \mathrm{v}^{*}|\mathbf{W} = \mathrm{W}^{*}) = \mathbb{P}(\mathbf{Q}_n = \mathrm{v}^{*})}$ for all $\mathrm{W}^{*} \in [3]$.
As an example, consider the case of $\mathrm{v}^{*} = [0,2,1]$. 
In this case, server $n$ sends the answer $\mathrm{Y}^{*} = \mathrm{X}_{2,2}+\mathrm{X}_{3,1}$ to the user. 

First, consider $\mathrm{W}^{*}=1$. 
From the table corresponding to the case of $(\mathrm{W},\mathrm{S})=(1,\{2\})$, 
it follows that ${\mathbb{P}(\mathbf{Q}_n = \mathrm{v}^{*}|\mathbf{W} = 1,\mathbf{S}=\{2\})} = \frac{1}{3}\times\frac{P_1}{4} = \frac{1}{24}$. 
Similarly, in the table for the case of $(\mathrm{W},\mathrm{S})=(1,\{3\})$, 
we observe that ${\mathbb{P}(\mathbf{Q}_n = \mathrm{v}^{*}|\mathbf{W} = 1,\mathbf{S}=\{3\})} = \frac{1}{3}\times\frac{P_1}{4} = \frac{1}{24}$.
Since ${\mathbb{P}(\mathbf{S} = \{2\}|\mathbf{W}=1)}= {\mathbb{P}(\mathbf{S} = \{3\}|\mathbf{W}=1)} = \frac{1}{2}$, it follows that ${\mathbb{P}(\mathbf{Q}_n = \mathrm{v}^{*}|\mathbf{W} = 1)} = \frac{1}{2}\times \frac{1}{24}+\frac{1}{2}\times \frac{1}{24} = \frac{1}{24}$. 

Next, consider $\mathrm{W}^{*}=2$. 
From the tables corresponding to the cases of $(\mathrm{W},\mathrm{S})=(2,\{1\})$ and $(\mathrm{W},\mathrm{S})=(2,\{3\})$, it follows that 
${\mathbb{P}(\mathbf{Q}_n = \mathrm{v}^{*}|\mathbf{W} = 2,\mathbf{S}=\{1\})} = 0$ and 
${\mathbb{P}(\mathbf{Q}_n = \mathrm{v}^{*}|\mathbf{W} = 2,\mathbf{S}=\{3\})} = \frac{1}{3}\times\frac{P_0}{2} = \frac{1}{12}$. 
Since ${\mathbb{P}(\mathbf{S} = \{1\}|\mathbf{W}=2)}= {\mathbb{P}(\mathbf{S} = \{3\}|\mathbf{W}=2)} = \frac{1}{2}$, 
we find that ${\mathbb{P}(\mathbf{Q}_n = \mathrm{v}^{*}|\mathbf{W} = 2)} = \frac{1}{2}\times 0+\frac{1}{2}\times \frac{1}{12} = \frac{1}{24}$. 

Lastly, consider $\mathrm{W}^{*}=3$. 
In the tables for the cases of $(\mathrm{W},\mathrm{S})=(3,\{1\})$ and $(\mathrm{W},\mathrm{S})=(3,\{2\})$, we observe that ${\mathbb{P}(\mathbf{Q}_n = \mathrm{v}^{*}|\mathbf{W} = 3,\mathbf{S}=\{1\})} = 0$ and 
${\mathbb{P}(\mathbf{Q}_n = \mathrm{v}^{*}|\mathbf{W} = 3,\mathbf{S}=\{2\})} = \frac{1}{3}\times\frac{P_0}{2} = \frac{1}{12}$. 
Since ${\mathbb{P}(\mathbf{S} = \{1\}|\mathbf{W}=3)}= {\mathbb{P}(\mathbf{S} = \{2\}|\mathbf{W}=3)} = \frac{1}{2}$, we find that ${\mathbb{P}(\mathbf{Q}_n = \mathrm{v}^{*}|\mathbf{W} = 3)} = \frac{1}{2}\times 0+\frac{1}{2}\times \frac{1}{12} = \frac{1}{24}$.  

By the above results, $\mathbb{P}(\mathbf{Q}_n = \mathrm{v}^{*}|\mathbf{W} = \mathrm{W}^{*})=\frac{1}{24}$ for all $\mathrm{W}^{*}\in [3]$. 
Since $\mathbb{P}(\mathbf{W} = \mathrm{W}^{*}) = \frac{1}{3}$ for all $\mathrm{W}^{*}\in [3]$, it follows that ${\mathbb{P}(\mathbf{Q}_n = \mathrm{v}^{*})} = 3\times \frac{1}{3}\times \frac{1}{24} = \frac{1}{24}$. 
Thus, ${\mathbb{P}(\mathbf{Q}_n = \mathrm{v}^{*}|\mathbf{W} = \mathrm{W}^{*}) = \mathbb{P}(\mathbf{Q}_n = \mathrm{v}^{*})}$ for all $\mathrm{W}^{*} \in [3]$, implying that the privacy condition is met for the case of $\mathrm{v}^{*} = [0,2,1]$. 
Similarly, it can be shown that the privacy condition is met for all other cases of $\mathrm{v}^{*}$.

For this example, the proposed scheme achieves the rate $R = {(N-1)}/{(N-P_0)} = 4/5$, 
which is higher than the rate $R^{*} = {(N^g-N^{g-1})}/{(N^g-1)} = 3/4$ achieved by the scheme presented in~\cite{KGHERS2020}. 

\cleardoublepage

%\balance

\bibliographystyle{IEEEtran}
\bibliography{PIR_PC_Refs}

\newpage

\appendix

%\appendices

%\section{Proof of Lemma~\ref{lem:Main}}

In this section, we present the proof of Lemma~\ref{lem:Main}. 
In particular, we provide the proofs for privacy, recoverability, and the achievable rate of the proposed scheme. 

\subsection{Proof of Privacy}\label{subsec:Privacy}
We show that the chosen probabilities $P_0,P_1,\dots,P_{g-1}$ guarantee that the privacy condition is satisfied. 
To simplify the notation, we denote $\mathrm{Q}_n^{[\mathrm{W},\mathrm{S}]}$ and $\mathbf{Q}_n^{[\mathbf{W},\mathbf{S}]}$ by $\mathrm{Q}_n$ and $\mathbf{Q}_n$, respectively.

Suppose that the user sends the vector $\mathrm{v}^{*}$ to server $n$ as the query, 
i.e., ${\mathrm{Q}_n=\mathrm{v}^{*}}$. 
To satisfy the privacy condition, 
it must hold that ${\mathbb{P}(\mathbf{Q}_n=\mathrm{v}^{*}|\mathbf{W}=\mathrm{W}^{*})}=\mathbb{P}(\mathbf{Q}_n=\mathrm{v}^{*})$ for all $\mathrm{W}^{*}\in [K]$. 
Equivalently, $\mathbb{P}(\mathbf{Q}_n=\mathrm{v}^{*}|\mathbf{W}=\mathrm{W}^{*})$ must remain constant for all $\mathrm{W}^{*}\in [K]$. 

Note that $|\mathrm{supp}(\mathrm{v}^{*})|$ is either $0$, or $K$, or $k(M+1)$ for some $k\in [g-1]$. 
When $|\mathrm{supp}(\mathrm{v}^{*})|=0$ (i.e., the vector $\mathrm{v}^{*}$ is all-zero), 
it is easy to see that \[{\mathbb{P}(\mathbf{Q}_n=\mathrm{v}^{*}|\mathbf{W}=\mathrm{W}^{*}) = P_0\times \frac{1}{N}}\] for all $\mathrm{W}^{*}\in [K]$.
This is because the vector $\mathrm{v}^{*}$ is all-zero iff 
$I=0$ is selected (and hence, the vector $\mathrm{a}$ is all-zero) and 
the user sends the vector $\mathrm{v}_1 = \mathrm{a}$ to server $n$.  
Similarly, when $|\mathrm{supp}(\mathrm{v}^{*})|=K$, it readily follows that \[\mathbb{P}(\mathbf{Q}_n=\mathrm{v}^{*}|\mathbf{W}=\mathrm{W}^{*}) = P_{g-1}\times \frac{1}{(N-1)^{K-1}}\times \frac{1}{N}\] for all $\mathrm{W}^{*}\in [K]$. 
This is because $|\mathrm{supp}(\mathrm{v}^{*})|=K$ iff 
$I=g-1$ is selected, 
the $i$th component of the selected vector $\mathrm{a}$ with $\mathrm{supp}(\mathrm{a})=[K]\setminus \{\mathrm{W}^{*}\}\cup \mathrm{S}$ is equal to $\mathrm{v}^{*}(i)$ for all ${i\in [K]\setminus \{\mathrm{W}^{*}\}\cup \mathrm{S}}$, 
the $i$th component of the selected vector $\mathrm{b}_1+\mathrm{b}_2$ with $\mathrm{supp}(\mathrm{b}_1+\mathrm{b}_2)=\mathrm{S}$ is equal to $\mathrm{v}^{*}(i)$ for all ${i\in  \mathrm{S}}$, and 
the user sends the vector $\mathrm{v}_{j+1} = \mathrm{a}+\mathrm{b}_1+\mathrm{b}_2+\mathrm{c}_j$ to server $n$, where $j = \mathrm{v}^{*}(\mathrm{W}^{*})$. 
By these results, 
$\mathbb{P}(\mathbf{Q}_n=\mathrm{v}^{*}|\mathbf{W}=\mathrm{W}^{*})$ is constant for all $\mathrm{W}^{*}\in [K]$, when $|\mathrm{supp}(\mathrm{v}^{*})|=0$ or $|\mathrm{supp}(\mathrm{v}^{*})|=K$, regardless of the choice of probabilities $P_0,P_1,\dots,P_{g-1}$. 

Next, suppose that ${|\mathrm{supp}(\mathrm{v}^{*})| = k(M+1)}$ for some ${k\in [g-1]}$.
Fix an arbitrary ${\mathrm{W}^{*}\in [K]}$. 
It is easy to see that 
$\mathbb{P}(\mathbf{S}=\mathrm{S}^{*}|\mathbf{W}=\mathrm{W}^{*})=1/\binom{K-1}{M}$ for all $M$-subsets ${\mathrm{S}^{*}\subseteq [K]\setminus \{\mathrm{W}^{*}\}}$. 
Thus, $\mathbb{P}(\mathbf{Q}_n=\mathrm{v}^{*}|\mathbf{W}=\mathrm{W}^{*})$ is given by
\begin{align*}
\frac{1}{\binom{K-1}{M}} \sum_{\mathrm{S}^{*}\subseteq [K]\setminus \{\mathrm{W}^{*}\}} \mathbb{P}(\mathbf{Q}_n=\mathrm{v}^{*}|\mathbf{W}=\mathrm{W}^{*},\mathbf{S}=\mathrm{S}^{*}).
\end{align*}

We shall consider the following two cases separately: 
(i)~${\mathrm{W}^{*}\in \mathrm{supp}(\mathrm{v}^{*})}$, and 
(ii) ${\mathrm{W}^{*}\not\in \mathrm{supp}(\mathrm{v}^{*})}$.

First, consider the case (i). 
In this case, ${\mathrm{W}^{*}\in \mathrm{supp}(\mathrm{v}^{*})}$, and hence, 
$\mathbb{P}(\mathbf{Q}_n=\mathrm{v}^{*}|\mathbf{W}=\mathrm{W}^{*},\mathbf{S}=\mathrm{S}^{*})=0$ for every ${\mathrm{S}^{*}\not\subset \mathrm{supp}(\mathrm{v}^{*})}$. 
This readily follows from the query construction. 
Thus, $\mathbb{P}(\mathbf{Q}_n=\mathrm{v}^{*}|\mathbf{W}=\mathrm{W}^{*})$ is given by
\begin{align*}
& \frac{1}{\binom{K-1}{M}} \sum_{\mathrm{S}^{*}\subseteq \mathrm{supp}(\mathrm{v}^{*})\setminus \{\mathrm{W}^{*}\}} \mathbb{P}(\mathbf{Q}_n=\mathrm{v}^{*}|\mathbf{W}=\mathrm{W}^{*},\mathbf{S}=\mathrm{S}^{*}).
\end{align*}
In addition, for every $M$-subset  $\mathrm{S}^{*}\subseteq \mathrm{supp}(\mathrm{v}^{*})\setminus \{\mathrm{W}^{*}\}$, it follows that $\mathbb{P}(\mathbf{Q}_n=\mathrm{v}^{*}|\mathbf{W}=\mathrm{W}^{*},\mathbf{S}=\mathrm{S}^{*})$ is given by
\begin{align*}
& P_{k-1} \times \frac{1}{\binom{K-M-1}{(k-1)(M+1)}} \\ 
& \quad \times \frac{1}{(N-1)^{(k-1)(M+1)}} \times \frac{1}{(N-1)^M} \times\frac{1}{N}
\end{align*} 
because 
$|\mathrm{supp}(\mathrm{v}^{*})|=k(M+1)$ ($<K$ for all $k\in [g-1]$) and 
$\{\mathrm{W}^{*}\}\cup \mathrm{S}^{*}\subseteq \mathrm{supp}(\mathrm{v}^{*})$ iff 
$I=k-1$ is selected, 
the $i$th component of the selected vector $\mathrm{a}$ with $\mathrm{supp}(\mathrm{a})\subset [K]\setminus \{\mathrm{W}^{*}\}\cup \mathrm{S}$ and $|\mathrm{supp}(\mathrm{a})|=(k-1)(M+1)$ is equal to $\mathrm{v}^{*}(i)$ for all ${i\in [K]\setminus \{\mathrm{W}^{*}\}\cup \mathrm{S}}$, 
the $i$th component of the selected vector $\mathrm{b}$ with $\mathrm{supp}(\mathrm{b})=\mathrm{S}$ is equal to $\mathrm{v}^{*}(i)$ for all ${i\in  \mathrm{S}}$, and 
the user sends the vector $\mathrm{v}_{j+1} = \mathrm{a}+\mathrm{b}+\mathrm{c}_j$ to server $n$, where $j = \mathrm{v}^{*}(\mathrm{W}^{*})$.

Since $|\mathrm{supp}(\mathrm{v}^{*})| = k(M+1)$, it readily follows that there are $\binom{k(M+1)-1}{M}$ $M$-subsets $\mathrm{S}^{*}\subseteq \mathrm{supp}(\mathrm{v}^{*})\setminus \{\mathrm{W}^{*}\}$. 
Thus, ${\mathbb{P}(\mathbf{Q}_n=\mathrm{v}^{*}|\mathbf{W}=\mathrm{W}^{*})}$ is given by
\begin{align}\label{eq:PWin}
& \frac{\binom{k(M+1)-1}{M}}{\binom{K-1}{M}}\times P_{k-1} \times \frac{1}{\binom{K-M-1}{(k-1)(M+1)}} \nonumber \\
& \quad \times \frac{1}{(N-1)^{(k-1)(M+1)}}\times \frac{1}{(N-1)^M} \times\frac{1}{N}.
\end{align}

Now, consider the case (ii). 
In this case, ${\mathrm{W}^{*}\not\in \mathrm{supp}(\mathrm{v}^{*})}$. 
First, suppose that $|\mathrm{supp}(\mathrm{v}^{*})|\neq (g-1)(M+1)$ (i.e., $k\neq g-1$). 
From the query construction, it follows that 
${\mathbb{P}(\mathbf{Q}_n=\mathrm{v}^{*}|\mathbf{W}=\mathrm{W}^{*},\mathbf{S}=\mathrm{S}^{*})}=0$ 
for every $M$-subset ${\mathrm{S}^{*}\subseteq [K]\setminus \{\mathrm{W}^{*}\}}$ 
such that $|\mathrm{S}^{*}\cap \mathrm{supp}(\mathrm{v}^{*})|\neq 0$. 
On the contrary, for every $M$-subset ${\mathrm{S}^{*}\subseteq [K]\setminus \{\mathrm{W}^{*}\}}$ 
such that $|\mathrm{S}^{*}\cap \mathrm{supp}(\mathrm{v}^{*})|=0$, it should not be hard to see that 
$\mathbb{P}(\mathbf{Q}_n=\mathrm{v}^{*}|\mathbf{W}=\mathrm{W}^{*},\mathbf{S}=\mathrm{S}^{*})$ is given by 
\begin{align*}
P_{k} \times \frac{1}{\binom{K-M-1}{k(M+1)}} \times \frac{1}{(N-1)^{k(M+1)}}\times\frac{1}{N}
\end{align*}
because $|\mathrm{supp}(\mathrm{v}^{*})|=k(M+1)$ ($\neq (g-1)(M+1)$) and 
$|\mathrm{S}^{*}\cap \mathrm{supp}(\mathrm{v}^{*})|=0$ iff $I=k$ ($\neq g-1$) is selected, 
the selected vector $\mathrm{a}$ with $\mathrm{supp}(\mathrm{a})\subset [K]\setminus \{\mathrm{W}^{*}\}\cup \mathrm{S}$ and $|\mathrm{supp}(\mathrm{a})|=k(M+1)$ is equal to the vector $\mathrm{v}^{*}$, 
and the user sends the vector $\mathrm{v}_1 = \mathrm{a}$ to server $n$. 

Since $|\mathrm{supp}(\mathrm{v}^{*})| = k(M+1)$, 
there are $\binom{K-1-k(M+1)}{M}$ $M$-subsets $\mathrm{S}^{*}\subseteq [K]\setminus \{\mathrm{W}^{*}\}$ such that $|\mathrm{S}^{*}\cap \mathrm{supp}(\mathrm{v}^{*})|=0$. 
Thus, $\mathbb{P}(\mathbf{Q}_n=\mathrm{v}^{*}|\mathbf{W}=\mathrm{W}^{*})$ is given by
\begin{align}\label{eq:PWnin1}
& \frac{\binom{K-1-k(M+1)}{M}}{\binom{K-1}{M}}\times P_{k} \nonumber \\ & \quad \times \frac{1}{\binom{K-M-1}{k(M+1)}} \times \frac{1}{(N-1)^{k(M+1)}}\times\frac{1}{N}.
\end{align}

Now, suppose that ${|\mathrm{supp}(\mathrm{v}^{*})|}={(g-1)(M+1)}$ (i.e., ${k=g-1}$). 
It readily follows from the query construction that ${\mathbb{P}(\mathbf{Q}_n=\mathrm{v}^{*}|\mathbf{W}=\mathrm{W}^{*},\mathbf{S}=\mathrm{S}^{*})}=0$ 
for every $M$-subset $\mathrm{S}^{*}\subseteq [K]\setminus \{\mathrm{W}^{*}\}$ 
such that $|\mathrm{S}^{*}\cap \mathrm{supp}(\mathrm{v}^{*})|\neq {g(M+1)-K}$. 
For every $M$-subset $\mathrm{S}^{*}\subseteq [K]\setminus \{\mathrm{W}^{*}\}$ 
such that $|\mathrm{S}^{*}\cap \mathrm{supp}(\mathrm{v}^{*})|= {g(M+1)-K}$, it follows that 
$\mathbb{P}(\mathbf{Q}_n=\mathrm{v}^{*}|\mathbf{W}=\mathrm{W}^{*},\mathbf{S}=\mathrm{S}^{*})$ is given by 
\begin{align*}
& P_{g-1} \times \frac{1}{(N-1)^{K-M-1}} \\
& \quad \times \frac{1}{\binom{M}{g(M+1)-K}} \times \frac{1}{(N-1)^{g(M+1)-K}}\times\frac{1}{N}.
\end{align*}
This is because $|\mathrm{supp}(\mathrm{v}^{*})|=(g-1)(M+1)$ and 
${|\mathrm{S}^{*}\cap \mathrm{supp}(\mathrm{v}^{*})|} =g(M+1)-K$ iff $I=g-1$ is selected, 
the $i$th component of the selected vector $\mathrm{a}$ with $\mathrm{supp}(\mathrm{a})=[K]\setminus \{\mathrm{W}^{*}\}\cup \mathrm{S}$ is equal to $\mathrm{v}^{*}(i)$ for all ${i\in [K]\setminus \{\mathrm{W}^{*}\}\cup \mathrm{S}}$, 
the $i$th component of the selected vector $\mathrm{b}_1$ with $\mathrm{supp}(\mathrm{b}_1)\subseteq \mathrm{S}$ and $|\mathrm{supp}(\mathrm{b}_1)|=g(M+1)-K$ is equal to $\mathrm{v}^{*}(i)$ for all ${i\in  \mathrm{S}}$, and 
the user sends the vector $\mathrm{v}_1 = \mathrm{a}+\mathrm{b}_1$ to server $n$. 

Since $|\mathrm{supp}(\mathrm{v}^{*})| = (g-1)(M+1)$, 
there are $\binom{(g-1)(M+1)}{g(M+1)-K}$ $M$-subsets $\mathrm{S}^{*}\subseteq [K]\setminus \{\mathrm{W}^{*}\}$ such that ${|\mathrm{S}^{*}\cap \mathrm{supp}(\mathrm{v}^{*})|}=g(M+1)-K$. 
Thus, it follows that $\mathbb{P}(\mathbf{Q}_n=\mathrm{v}^{*}|\mathbf{W}=\mathrm{W}^{*})$ is given by
\begin{align}\label{eq:PWnin2}
& \frac{\binom{(g-1)(M+1)}{g(M+1)-K}}{\binom{K-1}{M}} \times P_{g-1} \times \frac{1}{(N-1)^{K-M-1}} \nonumber \\
& \quad \times \frac{1}{\binom{M}{g(M+1)-K}} \times \frac{1}{(N-1)^{g(M+1)-K}}\times\frac{1}{N}.
\end{align}

By equating~\eqref{eq:PWin} and~\eqref{eq:PWnin1} and simplifying, 
it follows that when $|\mathrm{supp}(\mathrm{v}^{*})|\in \{M+1,\dots,(g-2)(M+1)\}$, ${\mathbb{P}(\mathbf{Q}_n=\mathrm{v}^{*}|\mathbf{W}=\mathrm{W}^{*})}$ is constant for all $\mathrm{W}^{*}\in [K]$, 
if
\begin{align}\label{eq:Cond1}
P_k = (N-1)P_{k-1}\frac{\binom{k(M+1)-1}{M} \binom{K-M-1}{k(M+1)}}{\binom{K-M-1}{(k-1)(M+1)} \binom{K-1-k(M+1)}{M}}
\end{align} for all $k\in [g-2]$. 
Simplifying~\eqref{eq:Cond1} yields  
\begin{align*}
P_{k} = P_0 (N-1)^k \prod_{j=1}^{k}\frac{\binom{j(M+1)-1}{M}\binom{K-M-1}{j(M+1)}}{\binom{K-M-1}{(j-1)(M+1)}\binom{K-1-j(M+1)}{M}},
\end{align*} which can be further simplified to 
\begin{align}\label{eq:Cond1sim}
P_k = P_0 (N-1)^k \prod_{j=1}^{k} \frac{K/(M+1)-j}{j}
\end{align} for all $k\in [g-2]$.

Similarly, by equating~\eqref{eq:PWin} and~\eqref{eq:PWnin2} and simplifying, when $|\mathrm{supp}(\mathrm{v}^{*})|=(g-1)(M+1)$, ${\mathbb{P}(\mathbf{Q}_n=\mathrm{v}^{*}|\mathbf{W}=\mathrm{W}^{*})}$ is constant for all $\mathrm{W}^{*}\in [K]$, 
if 
\begin{align}\label{eq:Cond2}
P_{g-1} = P_{g-2}(N-1)\frac{\binom{M}{g(M+1)-K}\binom{(g-1)(M+1)-1}{M}}{\binom{(g-1)(M+1)}{g(M+1)-K}\binom{K-M-1}{(g-2)(M+1)}}.
\end{align}
By substituting $P_{g-2}$ in terms of $P_0$ using~\eqref{eq:Cond1sim} and further simplifying,~\eqref{eq:Cond2} can be written as  
\begin{align}\label{eq:Cond2sim}
P_{g-1} = P_0 (N-1)^{g-1} \prod_{j=1}^{g-1} \frac{K/(M+1)-j}{j}.
\end{align}

By~\eqref{eq:Cond1sim} and~\eqref{eq:Cond2sim}, it follows that ${\mathbb{P}(\mathbf{Q}_n=\mathrm{v}^{*}|\mathbf{W}=\mathrm{W}^{*})}$ remains constant for all $\mathrm{W}^{*}\in [K]$ if 
\begin{align}\label{eq:Conds12sim}
P_k = P_0 (N-1)^k \prod_{j=1}^{k} \frac{K/(M+1)-j}{j}
\end{align} for all $k\in [g-1]$. 
Defining $r_k = \prod_{j=1}^{k} \frac{K/(M+1)-j}{j}$, 
%as in~\eqref{eq:rk}, 
we can simply rewrite~\eqref{eq:Conds12sim} as 
\begin{align}\label{eq:Pk}
P_k=P_0(N-1)^k r_k
\end{align}
for all $k\in [g-1]$. 
Since $\sum_{k=0}^{g-1} P_k=1$, it follows that 
\begin{align*}\label{eq:P0}
P_0 = \left(1+\sum_{k=1}^{g-1} r_k (N-1)^k \right)^{-1},
\end{align*} which matches our choice of $P_0$ in the proposed scheme. 
In addition, $P_1,\dots,P_{g-1}$ are determined by~\eqref{eq:Pk}, coinciding with our choice of $P_1,\dots,P_{g-1}$ in the proposed scheme. 
This completes the proof of privacy.

\subsection{Proof of Recoverability}\label{subsec:Recoverability}
For simplicity, 
we define ${\mathrm{Y}}_n:=\tilde{\mathrm{Y}}_{\pi^{-1}(n)}$ for all ${n\in [N]}$,
where ${\pi^{-1}: [N]\rightarrow [N]}$ is the inverse of the permutation $\pi$ chosen in Step~1. 
Note that if $\mathrm{v}_n$ is nonzero, ${{\mathrm{Y}}_n = \sum_{i\in \mathrm{supp}(\mathrm{v}_{n})} \mathrm{X}_{i,\mathrm{v}_n(i)}}$, and 
if $\mathrm{v}_n$ is all-zero, ${\mathrm{Y}}_n = 0$.

Given the answers ${\mathrm{A}^{[\mathrm{W},\mathrm{S}]}_1,\dots,\mathrm{A}^{[\mathrm{W},\mathrm{S}]}_N}$, i.e.,  ${\tilde{\mathrm{Y}}_1,\dots,\tilde{\mathrm{Y}}_N}$, or equivalently, ${\mathrm{Y}_1,\dots,\mathrm{Y}_N}$, 
the user can follow a specific procedure described below---which depends on the value of $I$ selected in Step~1---to recover all sub-packets ${\mathrm{X}_{\mathrm{W},1},\dots,\mathrm{X}_{\mathrm{W},N-1}}$ of the demand message $\mathrm{X}_{\mathrm{W}}$. 
The procedure for recovering $\mathrm{X}_{\mathrm{W},j}$ for each $j\in [N-1]$ is as follows:

\setcounter{subsubsection}{0}

\vspace{0.125cm}
\subsubsection{$I\neq g-1$}
By construction, ${\mathrm{v}_{j+1}-\mathrm{v}_1  = \mathrm{b}+\mathrm{c}_j}$. 
This implies that \[{{\mathrm{Y}}_{j+1}-{\mathrm{Y}}_{1} = \sum_{i\in \mathrm{supp}(\mathrm{b})} \mathrm{X}_{i,\mathrm{b}(i)}+\mathrm{X}_{\mathrm{W},j}}.\]
Given the answers of the servers, the user knows
${\mathrm{Y}}_1$ and ${\mathrm{Y}}_{j+1}$. 
Moreover, $\sum_{i\in \mathrm{supp}(\mathrm{b})} \mathrm{X}_{i,\mathrm{b}(i)}$ depends only on the side information messages 
because $\mathrm{supp}(\mathrm{b}) = \mathrm{S}$, and hence, $\sum_{i\in \mathrm{supp}(\mathrm{b})} \mathrm{X}_{i,\mathrm{b}(i)}$ is known to the user. 
Thus, 
$\mathrm{X}_{\mathrm{W},j}$ can be recovered as \[{\mathrm{X}_{\mathrm{W},j}={\mathrm{Y}}_{j+1}-{\mathrm{Y}}_{1} - \sum_{i\in \mathrm{supp}(\mathrm{b})} \mathrm{X}_{i,\mathrm{b}(i)}}.\] 

\vspace{0.125cm}
\subsubsection{$I=g-1$}
It readily follows from the construction that $\mathrm{v}_{j+1}-\mathrm{v}_1 = \mathrm{b}_2+\mathrm{c}_{j}$. 
This further implies that 
\[{{\mathrm{Y}}_{j+1}-{\mathrm{Y}}_{1} = \sum_{i\in \mathrm{supp}(\mathrm{b}_2)} \mathrm{X}_{i,\mathrm{b}_2(i)}+\mathrm{X}_{\mathrm{W},j}}.\] 
Since the user knows ${\mathrm{Y}}_1$ and ${\mathrm{Y}}_{j+1}$ through the servers' answers, and 
$\sum_{i\in \mathrm{supp}(\mathrm{b}_2)} \mathrm{X}_{i,\mathrm{b}_2(i)}$ is known to the user because 
$\mathrm{supp}(\mathrm{b}_2)\subseteq \mathrm{S}$, 
it follows that $\mathrm{X}_{\mathrm{W},j}$ can be recovered as \[{\mathrm{X}_{\mathrm{W},j}={\mathrm{Y}}_{j+1}-{\mathrm{Y}}_{1} - \sum_{i\in \mathrm{supp}(\mathrm{b}_2)} \mathrm{X}_{i,\mathrm{b}_2(i)}}.\]
This completes the proof of recoverability.

\subsection{Proof of Achievable Rate}\label{subsec:Rate}
By definition, the rate is defined as the ratio of $H(\mathbf{X}_{\mathrm{W}})$ to ${\sum_{n=1}^{N} H(\mathbf{A}^{[\mathrm{W},\mathrm{S}]}_n|\mathbf{Q}^{[\mathrm{W},\mathrm{S}]}_n)}$. 
Since each message is a uniform random variable over $\mathbbmss{F}_q^{N-1}$, it follows that
$H(\mathbf{X}_{\mathrm{W}}) = (N-1)\log_2 q=B$. 
To compute ${\sum_{n=1}^{N} H(\mathbf{A}^{[\mathrm{W},\mathrm{S}]}_n|\mathbf{Q}^{[\mathrm{W},\mathrm{S}]}_n)}$, we proceed as follows. 
Fix an arbitrary ${n\in [N]}$. 
In the proposed scheme, $\mathrm{Q}^{[\mathrm{W},\mathrm{S}]}_n = \tilde{\mathrm{v}}_n$, 
and 
$\mathrm{A}^{[\mathrm{W},\mathrm{S}]}_n = \tilde{\mathrm{Y}}_n=\sum_{i\in \mathrm{supp}(\tilde{\mathrm{v}}_n)} \mathrm{X}_{i,\tilde{\mathrm{v}}_n(i)}$. 
If the vector $\tilde{\mathrm{v}}_n$ is all-zero, 
server $n$ does not send back any answer to the user, and hence, 
${H(\mathbf{A}^{[\mathrm{W},\mathrm{S}]}_n|\mathbf{Q}^{[\mathrm{W},\mathrm{S}]}_n=\mathrm{Q}^{[\mathrm{W},\mathrm{S}]}_n)=0}$;
otherwise, if the vector $\tilde{\mathrm{v}}_n$ is nonzero, server $n$ sends back a nontrivial linear combination of the message sub-packets, and hence, 
${H(\mathbf{A}^{[\mathrm{W},\mathrm{S}]}_n|\mathbf{Q}^{[\mathrm{W},\mathrm{S}]}_n = \mathrm{Q}^{[\mathrm{W},\mathrm{S}]}_n)=\log_2 q= B/(N-1)}$. 
This is because the message sub-packets  are independent uniform random variables over $\mathbbmss{F}_q$, 
and 
any nontrivial linear combination of them  is a uniform random variable over $\mathbbmss{F}_q$.
Thus, \[H(\mathbf{A}^{[\mathrm{W},\mathrm{S}]}_n|\mathbf{Q}^{[\mathrm{W},\mathrm{S}]}_n)={(1-\mathbb{P}(\tilde{\mathrm{v}}_n= 0))B/(N-1)}.\] 
Since $\mathbb{P}(\tilde{\mathrm{v}}_n= 0) = P_0/N$, it follows that 
\[H(\mathbf{A}^{[\mathrm{W},\mathrm{S}]}_n|\mathbf{Q}^{[\mathrm{W},\mathrm{S}]}_n)={(1-P_0/N)B/(N-1)}.\] 
By symmetry, we also have \[H(\mathbf{A}^{[\mathrm{W},\mathrm{S}]}_1|\mathbf{Q}^{[\mathrm{W},\mathrm{S}]}_1)=\cdots=H(\mathbf{A}^{[\mathrm{W},\mathrm{S}]}_N|\mathbf{Q}^{[\mathrm{W},\mathrm{S}]}_N).\] 
Thus, \[\sum_{n=1}^{N} H(\mathbf{A}^{[\mathrm{W},\mathrm{S}]}_n|\mathbf{Q}^{[\mathrm{W},\mathrm{S}]}_n) = (N-P_0)B/(N-1),\] which yields the rate $R = (N-1)/(N-P_0)$. 
Substituting $P_0 = (1+\sum_{k=1}^{g-1} r_k (N-1)^k )^{-1}$, we observe that the rate of the proposed scheme is given by
\[R = (N-1)\left(N- \left(1+\sum_{k=1}^{g-1} r_k (N-1)^k \right)^{-1}\right)^{-1},\] which matches the rate $R$ defined in~\eqref{eq:R}. 
This completes the proof of the achievable rate. 

\end{document}